\documentclass[10pt,letterpaper,oneside]{article}

\usepackage[utf8]{inputenc}
\usepackage{amsmath,amssymb,amsfonts,amsthm,mathtools,bm}
\usepackage{booktabs}
\usepackage{tikz}
\usepackage{graphicx}
\usepackage{dblfloatfix}
\usepackage{authblk}
\usepackage[algo2e]{algorithm2e}
\usepackage{algorithm}
\usepackage{algorithmic}
\usepackage{natbib}
\renewcommand{\bibsection}{\subsubsection*{References}}
\usepackage{enumitem}
\setlistdepth{9}
\setlist[itemize,1]{label=$\bullet$}
\setlist[itemize,2]{label=$+$}
\setlist[itemize,3]{label=$--$}
\setlist[itemize,4]{label=$\star$}
\setlist[itemize,5]{label=$\circ$}
\setlist[itemize,6]{label=$\bullet$}
\setlist[itemize,7]{label=$\bullet$}
\setlist[itemize,8]{label=$\bullet$}
\setlist[itemize,9]{label=$\bullet$}
\renewlist{itemize}{itemize}{9}
\usepackage{xcolor}
\usepackage{hyperref}
\hypersetup{colorlinks=true, linkcolor=blue, citecolor=blue, urlcolor=blue}
\usepackage{comment}
\usepackage{times}
\IfFileExists{math_commands.tex}{

\usepackage{amsmath,amsfonts,bm}

\def\eqref#1{equation~\ref{#1}}

\def\1{\bm{1}}

\DeclareMathAlphabet{\mathsfit}{\encodingdefault}{\sfdefault}{m}{sl}
\SetMathAlphabet{\mathsfit}{bold}{\encodingdefault}{\sfdefault}{bx}{n}

\newcommand{\E}{\mathbb{E}}

}{}

\newenvironment{proofsketch}{%
  \proof}{\endproof}

\newcommand{\norm}[1]{\left\lVert#1\right\rVert}

\newcommand{\cind}{\mathrel{\perp\mspace{-9mu}\perp}}

\newtheorem{theorem}{Theorem}

\newtheorem{lemma}[theorem]{Lemma}
\newtheorem{remark}{Remark}
\newtheorem{assumption}{Assumption}

\newtheorem{proposition}{Proposition}
\newtheorem{definition}{Definition}[section]

\SetKw{Continue}{continue}
\SetKw{Break}{break}

\RequirePackage{geometry}
\renewcommand{\sfdefault}{phv}

\makeatletter

\renewcommand{\normalsize}{%
  \@setfontsize\normalsize{10}{11}%
  \setlength{\abovedisplayskip}{7pt plus 2pt minus 5pt}%
  \setlength{\belowdisplayskip}{\abovedisplayskip}%
  \setlength{\abovedisplayshortskip}{0pt plus 3pt}%
  \setlength{\belowdisplayshortskip}{4pt plus 3pt minus 3pt}%
}
\renewcommand{\small}{%
  \@setfontsize\small{9}{10}%
  \setlength{\abovedisplayskip}{6pt plus 1.5pt minus 4pt}%
  \setlength{\belowdisplayskip}{\abovedisplayskip}%
  \setlength{\abovedisplayshortskip}{0pt plus 2pt}%
  \setlength{\belowdisplayshortskip}{3pt plus 2pt minus 2pt}%
}
\renewcommand{\footnotesize}{\@setfontsize\footnotesize{8}{9.5}}
\renewcommand{\scriptsize}{\@setfontsize\scriptsize{7}{8}}
\renewcommand{\tiny}{\@setfontsize\tiny{6}{7}}
\renewcommand{\large}{\@setfontsize\large{12}{14}}
\renewcommand{\Large}{\@setfontsize\Large{14.4}{16}}
\renewcommand{\LARGE}{\@setfontsize\LARGE{17.28}{20}}
\renewcommand{\huge}{\@setfontsize\huge{20.74}{23}}
\renewcommand{\Huge}{\@setfontsize\Huge{24.88}{28}}
\normalsize

\renewcommand{\section}{%
  \@startsection{section}{1}{0pt}%
    {-2ex plus -0.5ex minus -0.2ex}%
    {1.5ex plus 0.3ex minus 0.2ex}%
    {\large\bfseries\raggedright}}
\renewcommand{\subsection}{%
  \@startsection{subsection}{2}{0pt}%
    {-1.8ex plus -0.5ex minus -0.2ex}%
    {0.8ex plus 0.2ex}%
    {\normalsize\bfseries\raggedright}}
\renewcommand{\subsubsection}{%
  \@startsection{subsubsection}{3}{0pt}%
    {-1.5ex plus -0.5ex minus -0.2ex}%
    {0.5ex plus 0.2ex}%
    {\normalsize\bfseries\raggedright}}
\renewcommand{\paragraph}{%
  \@startsection{paragraph}{4}{0pt}%
    {1.5ex plus 0.5ex minus 0.2ex}{-1em}%
    {\normalsize\bfseries}}
\renewcommand{\subparagraph}{%
  \@startsection{subparagraph}{5}{0pt}%
    {1.5ex plus 0.5ex minus 0.2ex}{-1em}%
    {\normalsize\bfseries}}

\def\@listi{\setlength{\leftmargin}{\leftmargini}}
\def\@listii{%
  \setlength{\leftmargin}{\leftmarginii}%
  \setlength{\labelwidth}{\dimexpr\leftmarginii-\labelsep\relax}%
  \setlength{\topsep}{2pt plus 1pt minus 0.5pt}%
  \setlength{\parsep}{1pt plus 0.5pt minus 0.5pt}%
  \setlength{\itemsep}{\parsep}}
\def\@listiii{%
  \setlength{\leftmargin}{\leftmarginiii}%
  \setlength{\labelwidth}{\dimexpr\leftmarginiii-\labelsep\relax}%
  \setlength{\topsep}{1pt plus 0.5pt minus 0.5pt}%
  \setlength{\parsep}{0pt}%
  \setlength{\partopsep}{0.5pt minus 0.5pt}%
  \setlength{\itemsep}{\topsep}}
\def\@listiv{%
  \setlength{\leftmargin}{\leftmarginiv}%
  \setlength{\labelwidth}{\dimexpr\leftmarginiv-\labelsep\relax}}
\def\@listv{%
  \setlength{\leftmargin}{\leftmarginv}%
  \setlength{\labelwidth}{\dimexpr\leftmarginv-\labelsep\relax}}
\def\@listvi{%
  \setlength{\leftmargin}{\leftmarginvi}%
  \setlength{\labelwidth}{\dimexpr\leftmarginvi-\labelsep\relax}}

\renewenvironment{table}
  {\setlength{\abovecaptionskip}{0pt}%
   \setlength{\belowcaptionskip}{7pt}%
   \@float{table}}
  {\end@float}

\renewcommand{\footnoterule}{%
  \kern-3pt\hrule width 12pc\kern2.6pt}

\renewenvironment{abstract}
  {\vskip0.075in
   \centerline{\large\bfseries\abstractname}%
   \vspace{0.5ex}\begin{quote}}
  {\par\end{quote}\vskip1ex}

\makeatother

\title{Causal Discovery via Transformed Low-Rank Quantile Surfaces}

\author{%
\textbf{Ryo Kamimura}$^{1,3}$
\qquad
\textbf{Thong Pham}$^{2,1,3}$\thanks{Corresponding author.}
\\
$^{1}$Shiga University
\quad
$^{2}$The University of Osaka
\quad
$^{3}$RIKEN AIP
\\
\texttt{s6025124@st.shiga-u.ac.jp}
\quad
\texttt{thong-pham@ds.sanken.osaka-u.ac.jp}
}

\date{}

\begin{document}

\maketitle

\begin{abstract}
We propose Low-Rank Quantile Surfaces (LRQS), a bivariate causal model in which, in the causal direction, an unknown monotone transformation of the conditional quantile surface admits a low-rank functional decomposition. LRQS subsumes location-scale noise models and post-nonlinear heteroscedastic noise models, while allowing multiple quantile bases to represent changes beyond location-scale effects. We prove generic identifiability of LRQS: the transformed quantile surface is low rank in the causal direction, whereas reverse representability under the corresponding constraints occurs only for exceptional, fine-tuned cause marginals. We provide a simple-yet-powerful causal score using a nonparametric fitting procedure that alternates between rank-constrained approximation of discretized quantile surfaces and isotonic estimation of the unknown monotone transformation. Experiments on synthetic mechanisms with higher-rank distributional shape variation and strong nonlinear distortions, together with standard bivariate benchmarks, show that LRQS is especially effective when conditional distributional shape or observation distortion goes beyond existing location-scale assumptions.
\end{abstract}

\section{Introduction}

Inferring causal direction from observational data requires asymmetry: the
conditional distribution in the causal direction should admit a simpler
description than the one in the anticausal direction. Classical approaches
instantiate this principle through structural restrictions such as additive noise
models (ANMs)~\citep{Hoyer09NIPS}, post-nonlinear models (PNLs)~\citep{Zhang09UAI}, and location-scale noise models (LSNMs)~\citep{Immer_2023}. These models impose location or location-scale structure, either directly or after an invertible transformation. They are identifiable because their assumptions are generally not preserved under
reversal, but their expressiveness is limited: in the latent scale, real conditional distributions may
vary not only in location and scale, but also in skewness, tail behavior, and
other shape features.

Quantile-based causal discovery provides a natural way to model distributional
asymmetry beyond conditional means. A recent line of work based on quantile
partial effects (QPE) assumes that the derivative of the conditional quantile
surface with respect to the conditioning variable lies in a finite span of known
basis functions~\citep{QPE_2026causal}. This is an expressive observational
restriction, but it is imposed on the quantile slope field rather than on the
quantile surface itself. Moreover, in their identifiability theory, the finite basis is fixed in advance and is
not designed to absorb an unknown monotone observation transformation such as those in PNLs.

We propose \emph{Low-Rank Quantile Surfaces} (LRQS), a bivariate causal model
that addresses this limitation. Let \(Q_{Y\mid X=x}(u)\) denote the conditional
quantile function of \(Y\mid X=x\). LRQS assumes that, in the causal direction,
there exists an unknown increasing transformation \(h=g^{-1}\) such that
\[
    h\!\left(Q_{Y\mid X=x}(u)\right)
    =
    a(x) + \sum_{k=1}^{K} b_k(x)q_k(u).
\]
Thus the observed quantile surface need not be low rank; instead, it becomes
low rank after an unknown monotone unwarping. The case \(K=1\) recovers a
post-nonlinear heteroscedastic noise model (PNL-HNM), while larger \(K\) captures richer
changes in conditional shape beyond location-scale variation in the latent scale.

\begin{figure}[t]
  \centering
  \includegraphics[width=\linewidth]{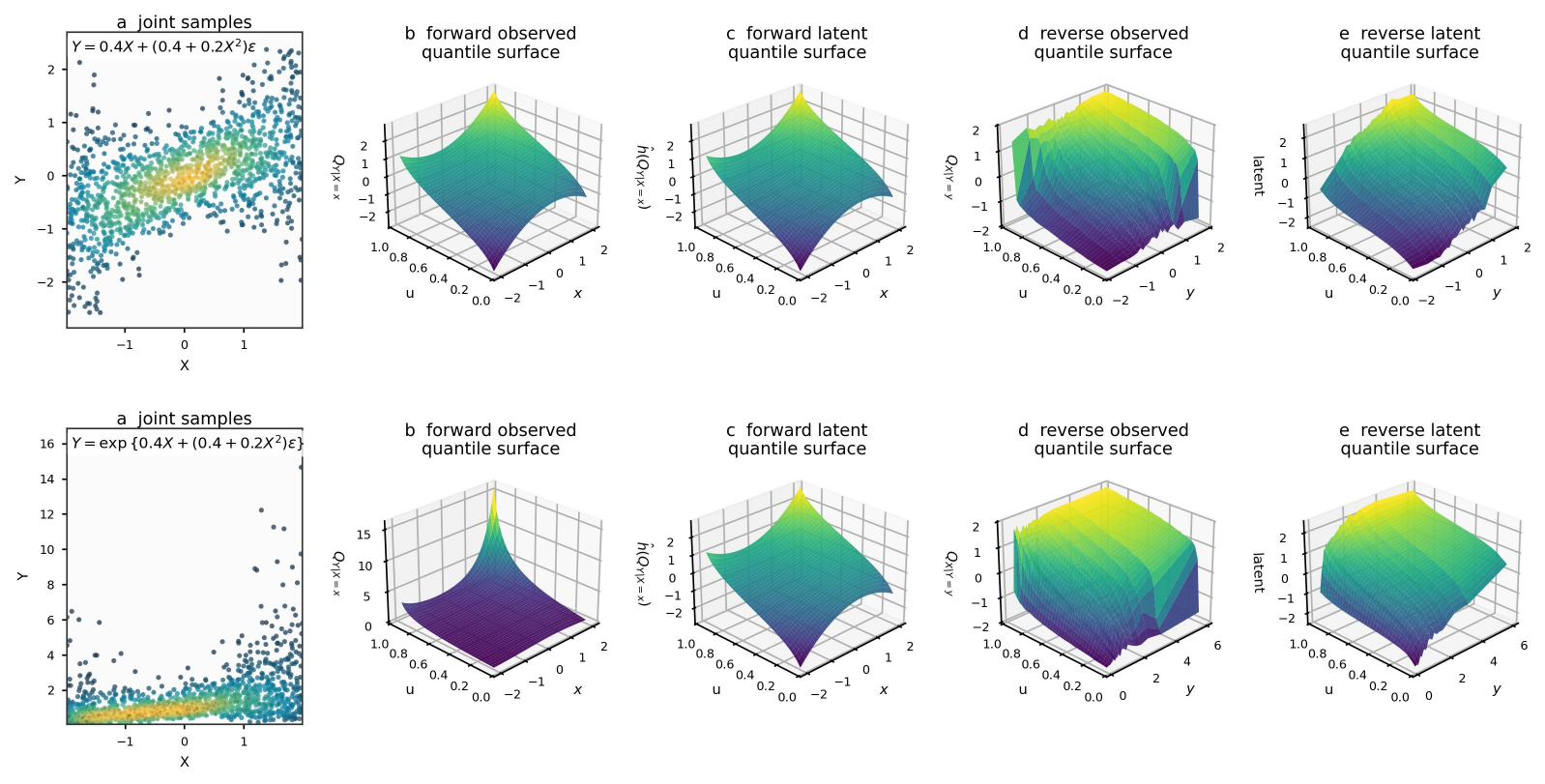}
  \caption{
  Observed and latent quantile surfaces. First row: an LSNM 
  \(Y=0.4X+(0.4+0.2X^2)\varepsilon\). Second row: a PNL-HNM
  \(Y=\exp\{0.4X+(0.4+0.2X^2)\varepsilon\}\).
  (a) Joint samples of \(X\) and \(Y\);
  (b) forward observed quantile surface \(Q_{Y\mid X=x}(u)\);
  (c) forward latent quantile surface \(\hat h(Q_{Y\mid X=x}(u))\);
  (d) reverse observed quantile surface \(Q_{X\mid Y=y}(u)\);
  (e) reverse latent quantile surface.
  In the causal direction, the observed surface becomes low rank after a monotone unwarping.
  }
  \label{fig:figure1}
\end{figure}

We prove that this transformed low-rank structure is generically identifiable.
In the causal direction, the transformed conditional quantile surface is low rank by construction. For every fixed finite number \(L\) of reverse non-intercept components and each prescribed reverse quantile basis, the compatible cause log-densities have local restrictions belonging to a finite-dimensional family, and thus are exceptional. A corresponding result with a free reverse basis holds for \(L=1\). Neither result requires \(L=K\). 

Our theory leads to a simple-yet-powerful nonparametric causal score. We estimate conditional
quantile matrices in both directions and fit the LRQS structure by alternating
between rank-constrained approximation and isotonic estimation of the unknown
monotone transformation. The direction with the smaller reconstruction error is
selected as causal. Figure~\ref{fig:figure1} gives a visual summary of the LRQS method. 
Our contributions are:
\begin{itemize}
    \item We introduce LRQS, a transformed low-rank conditional quantile model
    that extends location-scale and post-nonlinear heteroscedastic noise models.

    \item We establish generic identifiability by characterizing reverse-compatible cause marginals through finite-dimensional local restrictions on their log-densities. This gives, to our knowledge, the first
generic identifiability result for the nondegenerate bivariate PNL-HNM class
with unknown transformation and reverse noise distribution.

    \item We provide a practical causal discovery algorithm based on low-rank
    quantile-surface fitting and isotonic estimation of the unknown transformation.

    \item We show through experiments that LRQS is particularly effective on mechanisms with higher-rank conditional shape variation and strong nonlinear observation distortions, while remaining competitive on standard bivariate benchmarks.
\end{itemize}

\section{Related work}
Quantile-based causal discovery exploits asymmetries beyond the conditional
mean. Bivariate quantile causal discovery uses multiple conditional quantile
levels and an independence-of-mechanisms description-length principle
\citep{bQCD_2020}. 

The closely related quantile partial
effect (QPE) framework assumes that the derivative of the conditional quantile
surface with respect to the conditioning variable, i.e.
$\psi(x,y)=Q_x(x,F_{Y\mid X=x}(y))$, lies in a fixed finite span in $x$ and $y$ coordinates~\citep{QPE_2026causal}. LRQS differs by imposing low rank on the transformed
conditional quantile surface in $x$ and $u$ coordinates: both the quantile bases and the monotone
unwarping are learned rather than fixed in advance. The QPE and LRQS model classes overlap nontrivially, and neither
theory subsumes the other:
(a) ANMs and LSNMs belong to both frameworks;
(b) LRQS with an unknown distortion, such as PNL or PNL-HNM,
is not contained in QPE with any basis fixed in advance; and
(c) conversely, there are QPE models that cannot be expressed
as a transformed finite-rank representation for any fixed $K$
in the LRQS framework.
 
 Recent alternatives include
optimal-transport and velocity-field criteria, such as DIVOT and causal
velocity models, which characterize causal asymmetry through transport dynamics
or score/velocity equations \citep{tu2022optimal,xi2025distinguishing}, as well
as neural generative or likelihood-based bivariate methods
\citep{goudet2018learning,Immer_2023}. These methods are
complementary, but LRQS is targeted at an interpretable nonparametric
quantile-surface score with explicit transformed low-rank identifiability
guarantees. 
See Appendix~\ref{appendix:related_works} for more detailed discussions.

\section{The model}
In the introduction we motivated LRQS as a transformed low-rank structure in the
conditional quantile surface. We now give the formal model definition:
\begin{equation}
Y = g\left(a(X) + \sum_{k=1}^{K} b_k(X) q_k(U)\right),
\label{eq:low_rank_quantile}
\end{equation}
where $U\sim\mathrm{Unif}(0,1)$, $U\cind X$, and $g$ is a
continuous, strictly increasing function with inverse
$h\coloneq g^{-1}$.

For a fixed $K$, the transformation $g$, the basis $\{q_k\}_{k=1}^{K}$, the shift function $a(x)$, and the coefficient functions $b_k(x)$ are \emph{unknown}.

\textbf{LRQS generalizes post-nonlinear heteroscedastic noise models.} When each $q_k$ is a continuous, increasing function, $q_k$ is the inverse CDF of some random variable $\varepsilon_k$, and thus
$
Y = g\left(a(X) + \sum_{k=1}^{K} b_k(X) \varepsilon_k\right)
$
 for some generally dependent noises $\varepsilon_k = q_k(U)$. When $K = 1$, we get $Y = g(a(X) + b(X)\varepsilon)$. This class is sometimes called a post-nonlinear heteroscedastic noise model (PNL-HNM)~\citep{QPE_2026causal}. It contains the LSNM~\citep{Immer_2023} when $g(z) = z$.

\textbf{Beyond location and scale.}
Multiple quantile bases $\{q_k\}_{k=1}^{K}$ allow the latent
conditional distributions to vary in shape, including skewness,
kurtosis, and tail behavior, rather than only in location and scale.
See Appendix~\ref{appendix:beyond_pnl_hnm} for a discussion.

The following assumption makes \(u\) the conditional quantile level: increasing the latent noise rank increases the response, so the structural function can be read directly as a conditional quantile function.
\begin{assumption}\label{assumption:monotone}
For every fixed $x$, the function
$
z(x,u) \coloneq a(x) + \sum_{k=1}^{K} b_k(x) q_k(u)
$
is continuous and strictly increasing in $u \in (0,1)$.
\end{assumption}
This assumption holds, for example, if each $b_k(x)$ is positive and each basis function $q_k$ is continuous and strictly increasing.

The following proposition, whose proof is in Appendix~\ref{appendix:proof}, shows the low-rank structure of the transformed quantile surface.
\begin{proposition}\label{proposition:factorization}
Denote by $Q_{Y\mid X=x}(u)$ the conditional quantile function of $Y\mid X=x$. Under Assumption~\ref{assumption:monotone}, the conditional quantile surface is
\begin{equation}
Q(x,u) \coloneq Q_{Y\mid X=x}(u) = g\left(a(x) + \sum_{k=1}^{K} b_k(x) q_k(u)\right).
\label{eq:conditional_quantile_surface}
\end{equation}
Equivalently,
$
h\bigl(Q(x,u)\bigr) = a(x) + \sum_{k=1}^{K} b_k(x) q_k(u).
$
\end{proposition}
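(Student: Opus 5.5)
The plan is to show that, for each fixed $x$, the map $u\mapsto g(z(x,u))$ is a continuous, strictly increasing function that pushes $\mathrm{Unif}(0,1)$ forward to the conditional law of $Y$ given $X=x$. Such a map must coincide with the quantile function on $(0,1)$.

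\textbf{Step 1: the conditional law.} Because $U\cind X$, conditioning on $X=x$ leaves $U\sim\mathrm{Unif}(0,1)$, so $Y\mid X=x$ has the law of $T_x(U)$, where $T_x(u)\coloneq g(z(x,u))$. Rigorously, I would take the regular conditional distribution of $Y$ given $X$ as the pushforward of $\mathrm{Unif}(0,1)$ under $T_x$. This is legitimate because $(x,u)\mapsto T_x(u)$ is jointly measurable, given measurability of $a$ and the $b_k$ together with continuity of $g$ and the $q_k$. Independence then gives, for any Borel set $A$,
\[
P(Y\in A,\, X\in B)=\int_B P\bigl(T_x(U)\in A\bigr)\,dP_X(x).
\]

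\textbf{Step 2: monotonicity.} By Assumption~\ref{assumption:monotone}, $z(x,\cdot)$ is continuous and strictly increasing on $(0,1)$, and $g$ is continuous and strictly increasing. Hence $T_x$ is continuous and strictly increasing.

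It follows that for each $u\in(0,1)$,
\[
P\bigl(T_x(U)\le T_x(u)\bigr)=P(U\le u)=u.
\]
Combining this with strictness, for any $y<T_x(u)$ we get $F_{Y\mid X=x}(y)<u$, again using strict monotonicity and continuity of $T_x$. A little care is needed when $y$ lies below the range of $T_x$, but there the CDF is $0<u$.

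\textbf{Step 3: the quantile function.} Recall $Q_{Y\mid X=x}(u)=\inf\{y: F_{Y\mid X=x}(y)\ge u\}$. Step 2 gives $F(T_x(u))=u$ and $F(y)<u$ for all $y<T_x(u)$, so the infimum equals $T_x(u)$. This yields \eqref{eq:conditional_quantile_surface}.

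Applying $h=g^{-1}$, which is well defined on the range of $g$ because $g$ is strictly increasing and continuous, gives the equivalent form $h(Q(x,u))=a(x)+\sum_k b_k(x)q_k(u)$.

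\textbf{Main obstacle.} The only delicate point is the measure-theoretic one in Step 1. Conditional quantile functions are defined only for $P_X$-almost every $x$, so the statement should be read as holding for the version of the conditional distribution given by the pushforward. I would state this explicitly rather than attempt a pointwise-in-$x$ claim. The rest is the standard fact that a strictly increasing continuous transform of a uniform variable has that transform as its quantile function.
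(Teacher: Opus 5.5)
Your proof is correct and follows essentially the same route as the paper's. The paper argues in two stages: $Z_x(U)$ has quantile function $Z_x$ because $Z_x$ is strictly increasing, and then the strictly increasing $g$ preserves quantiles. You instead compose $T_x=g\circ Z_x$ first and check the infimum definition directly, adding the regular-conditional-distribution and $P_X$-almost-everywhere care that the paper leaves implicit. One small correction: there you need only measurability of the $q_k$, not their continuity, since Assumption~\ref{assumption:monotone} guarantees continuity of $z(x,\cdot)$ but not of each $q_k$.
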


A surface $H(x,u)$ has rank at most $d$ if it can be written as
$H(x,u)=\sum_{j=1}^{d}A_j(x)B_j(u)$.
Proposition~\ref{proposition:factorization} therefore implies that
$h(Q(x,u))$ has rank at most $K+1$, with the additional component
corresponding to the intercept $a(x)$.
Evaluating this surface on a grid gives a matrix with the same
rank upper bound, which motivates our fitting procedure.
The observed surface $Q(x,u)$, by contrast, need not have finite rank.

\textbf{Remark.} $Y$ as defined in Eq.~(\ref{eq:low_rank_quantile}) is still a valid random variable without Assumption~\ref{assumption:monotone}, for example, when there exists some $x$ such that $z(x,u)$ is not increasing in $u$. However, in that case the induced conditional quantile surface need not retain the transformed low-rank structure above.

\section{Identifiability}
Fix the forward conditional density $r(y\mid x)=p_{Y\mid X}(y\mid x)$
and write $\xi(x)=\log p_X(x)$.
Changing $\xi$ preserves the forward LRQS mechanism but changes the
reverse conditional distribution through Bayes' rule.
We study which cause marginals also make the reverse quantile surface
$Q^\leftarrow(y,u):=Q_{X\mid Y=y}(u)$ belong to a specified LRQS class.
We use $K$ and $L$ for the forward and reverse numbers of
non-intercept components, respectively, with no relation imposed
between them.

The following definition formalizes the set of such tuned marginals:
\begin{definition}[Reverse-compatible marginal set]
Fix the forward conditional density $r(y\mid x)$.
For a backward model class $\mathcal M$, define
\begin{equation}
\mathcal B(r;\mathcal M)
:=
\Big\{
\xi:\ p_{X,Y}(x,y)=e^{\xi(x)}r(y\mid x)\ \text{and the induced }Q^{\leftarrow}(\cdot,\cdot)\in\mathcal M
\Big\}.
\end{equation}
\end{definition}

\paragraph{Notion of genericity.}
For an interval $I$, write $\xi|_I$ for the restriction of $\xi$ to $I$.
Our results place such restrictions of reverse-compatible
log-densities in common finite-parameter families.
The interval may depend on $\xi$, but the family depends only on
the fixed forward mechanism and the specified reverse class.
We use generic identifiability in this local finite-dimensional sense.
Here, ``local'' describes the restriction on the exceptional
marginals, not the identification of the causal direction.

For each reverse-compatible distribution under consideration,
the following two assumptions are required at the same interior
point $(x_0,y_0)$.
This point may vary between distributions.
Write $u_0=F_{X\mid Y=y_0}(x_0)$ and
$\alpha_{y_0}(x)
=\left.\partial_y\log r(y\mid x)\right|_{y=y_0}$,
with primes on $\alpha_{y_0}$ denoting differentiation in $x$.

\begin{assumption}[Local score variation]
\label{assumption:anchor}
The conditional $y$-score has a nonzero derivative in $x$ at $x_0$:
$\alpha_{y_0}'(x_0)\ne0$.
\end{assumption}

Thus $\alpha_{y_0}$ is locally invertible near $x_0$, allowing us to
recover a reverse quantile curve from the Bayes identity used in
the proofs.

\begin{assumption}[Local regularity]
\label{assumption:smooth}
In neighborhoods of the corresponding arguments, the densities, conditional quantile
maps, and functions defining the reverse representation are $C^3$,
and the relevant densities are strictly positive.
\end{assumption}

These conditions ensure that the logarithmic derivatives and inverse conditional quantiles used below are well-defined locally, and that \(Q_u^\leftarrow(y,u)=1/p_{X\mid Y=y}(Q^\leftarrow(y,u))>0\) locally.

The general idea in both Theorems~\ref{theorem:general_K_fixed_quantile} and~\ref{theorem:K__1_free_quantile} below is that, at a fixed conditioning value \(y_0\), a reverse quantile curve determines the corresponding conditional density of \(X\). Bayes’ rule then recovers the shape of the cause density from this conditional density and the fixed forward mechanism. Thus, restricting one reverse quantile curve at \(y_0\) already restricts the cause density on an interval.

\paragraph{Main message.}
We establish these local restrictions for any prescribed finite reverse
quantile basis and, when $L=1$, for an unrestricted reverse basis.
The reverse monotone transformation is unrestricted in both cases.

\subsection{General $L$, fixed basis}
We consider the following backward model class:
\begin{equation}
\label{eq:backward-fixed}
\mathcal M_{L,\text{fixed}}
=
\left\{
Q^\leftarrow:
Q^\leftarrow(y,u)
=\tilde g\left(
\tilde a(y)+\sum_{k=1}^{L}\tilde b_k(y)\tilde q_k(u)
\right)
\text{ for some }\tilde g,\tilde a,\{\tilde b_k\}_{k=1}^{L}
\right\}.
\end{equation}
The reverse basis
$\tilde q_1,\ldots,\tilde q_L:(0,1)\to\mathbb R$
is prescribed, while $\tilde g,\tilde a,\tilde b_1,\ldots,\tilde b_L$
remain unrestricted, subject to the model assumptions.

\textbf{The key object in the analysis} is the backward \emph{drift ratio} 
$
R(y,u) = \partial_{y}Q^{\leftarrow}(y,u)/\partial_u Q^{\leftarrow}(y,u) 
$, 
which measures how the $u$-th backward quantile of $X\mid Y = y$ moves as the conditioning value $y$ changes, normalized by $Q_u^{\leftarrow}$. When $Q^{\leftarrow} \in \mathcal{M}_{L,\text{fixed}}$, the outer monotone distortion cancels from this ratio, so $R$ exposes constraints imposed by the inner low-rank structure:
\begin{equation}
R(y,u)
= \dfrac{\tilde{g}'(z^{\leftarrow}(y,u))\partial_yz^{\leftarrow}(y,u)}{\tilde{g}'(z^{\leftarrow}(y,u))\partial_u z^{\leftarrow}(y,u)} = 
\frac{\tilde a'(y)+\sum_{k=1}^{L}\tilde b_k'(y)\tilde q_k(u)}
{\sum_{k=1}^{L}\tilde b_k(y)\tilde q_k'(u)},\label{eq:backward_drift_ratio}
\end{equation}
where $z^{\leftarrow}(y,u) = 
\tilde a(y)+\sum_{k=1}^{L}\tilde b_k(y)\,\tilde q_k(u)$.

\begin{theorem}[Prescribed reverse quantile basis]
\label{theorem:general_K_fixed_quantile}
Fix the forward conditional density $r$, a finite integer $L\ge1$,
and a reverse quantile basis $\{\tilde q_k\}_{k=1}^L$. Every reverse-compatible cause
log-density $\xi\in\mathcal B(r;\mathcal M_{L,\text{fixed}})$ satisfying both Assumptions~\ref{assumption:anchor}
and~\ref{assumption:smooth} at some interior point
agrees, on some nonempty open interval, with a member of a common
family parameterized by at most $2L+3$ real numbers.
This family depends only on $r$ and the prescribed reverse basis.
\end{theorem}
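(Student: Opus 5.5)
The plan is to turn the reverse quantile curve at a single conditioning value $y_0$ into a finite-parameter description of $\xi$ near $x_0$, using Bayes' rule together with the drift ratio \eqref{eq:backward_drift_ratio}.

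\textbf{Step 1: link the drift ratio to the forward score.} By Bayes' rule,
\[
\log p_{X\mid Y}(x\mid y)=\xi(x)+\log r(y\mid x)-\log p_Y(y).
\]
Differentiating in $y$ gives
\[
\partial_y \log p_{X\mid Y}(x\mid y)=\alpha_y(x)-c(y), \qquad c(y):=(\log p_Y)'(y).
\]
Next differentiate the identity $F_{X\mid Y=y}(Q^\leftarrow(y,u))=u$ in $y$. This gives $\partial_yF_{X\mid Y}(x\mid y)+p_{X\mid Y}(x\mid y)\,Q^\leftarrow_y=0$ at $x=Q^\leftarrow(y,u)$. Since $Q^\leftarrow_u=1/p_{X\mid Y}$, it follows that
\[
R(y,u)=-\partial_yF_{X\mid Y}\bigl(Q^\leftarrow(y,u)\mid y\bigr).
\]
Differentiating once more in $u$ and using $\partial_x\partial_yF=\partial_y p_{X\mid Y}$, the factors $p_{X\mid Y}$ and $Q^\leftarrow_u$ cancel. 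The result is
\[
\partial_uR(y,u)=c(y)-\alpha_y\bigl(Q^\leftarrow(y,u)\bigr).
\]
Assumption~\ref{assumption:smooth} ($C^3$ regularity and positive densities) justifies these differentiations and the interchange of derivatives near $(y_0,u_0)$. This identity has $\xi$ eliminated, and it is the Bayes identity referred to in the text.

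\textbf{Step 2: parameterize the reverse curve.} At $y=y_0$, membership in $\mathcal M_{L,\text{fixed}}$ and \eqref{eq:backward_drift_ratio} give
\[
R(y_0,u)=\frac{A+\sum_k B_k\tilde q_k(u)}{\sum_k C_k\tilde q_k'(u)},
\]
with $A=\tilde a'(y_0)$, $B_k=\tilde b_k'(y_0)$ and $C_k=\tilde b_k(y_0)$. This is a function of $u$ fixed by $2L+1$ numbers, and only $2L$ of them matter, because multiplying all of $A,B_k,C_k$ by the same nonzero scalar leaves $R$ unchanged. Assumption~\ref{assumption:anchor} makes $\alpha_{y_0}$ invertible near $x_0$, so on a neighbourhood of $u_0$
\[
Q^\leftarrow(y_0,u)=\alpha_{y_0}^{-1}\bigl(c(y_0)-\partial_u R(y_0,u)\bigr).
\]
Thus the whole local reverse quantile curve is determined by $y_0$, $c(y_0)$, and the $2L$ effective coefficients.

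\textbf{Step 3: recover $\xi$.} From the quantile curve we get $p_{X\mid Y}(x\mid y_0)=1/Q^\leftarrow_u(y_0,u)$ with $u=F_{X\mid Y=y_0}(x)$. The map $u\mapsto Q^\leftarrow(y_0,u)$ is strictly increasing, so it can be inverted locally to express $u$ as a function of $x$. Then
\[
\xi(x)=\log p_{X\mid Y}(x\mid y_0)-\log r(y_0\mid x)+\mathrm{const}
\]
on an open interval around $x_0$. The parameters are:
\begin{itemize}
\item $2L$ effective coefficients, after removing the common scale;
\item $c(y_0)$;
\item $y_0$ itself, which enters through $\alpha_{y_0}$ and $r(y_0\mid\cdot)$;
\item the additive normalizing constant.
\end{itemize}
This totals $2L+3$ real numbers. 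Every ingredient of the map from parameters to $\xi|_I$ depends only on $r$ and the prescribed $\tilde q_k$, so the family is common to all reverse-compatible $\xi$.

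\textbf{Main obstacle.} I expect the hardest part to be making the parameterization honest. One must check that the $u$-curve is a genuine graph over an $x$-interval, so that $\partial_uR$ is locally monotone or at least the composition is well defined. One must also ensure that the denominator $\sum_k C_k\tilde q_k'(u)$ does not vanish near $u_0$; this follows from $Q^\leftarrow_u>0$ together with $\tilde g'>0$. Finally, the domain of the family has to be stated so that it does not depend on the particular $\xi$, even though the interval $I$ does. I would handle this by defining the family as the set of germs obtained from all admissible parameter values, each on its own maximal domain of definition.
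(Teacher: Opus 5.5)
Your proposal is correct and follows essentially the same route as the paper. The shared steps are:
\begin{itemize}
\item the Bayes identity $R_u(y,u)=c(y)-\alpha\bigl(Q^\leftarrow(y,u),y\bigr)$;
\item the $2L$-parameter description of $R(y_0,\cdot)$ after removing the common scaling;
\item local inversion of $\alpha_{y_0}$ to recover $x(u)=Q^\leftarrow(y_0,u)$;
\item reconstruction of $\xi$ via $\xi(x(u))=\ell_0-\log r(y_0\mid x(u))-\log x_u(u)$.
\end{itemize}
Your parameter count matches the paper's: $2L$ coefficients plus $c_0$, $\ell_0$, and $y_0$, giving $2L+3$. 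Your germ-based handling of domains plays the role of the paper's brief remark that the local inverse branches are fixed by $r$.
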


\begin{proofsketch}

Fix an interior point as in the assumptions and write \(x(u)=Q^\leftarrow(y_0,u)\). At the fixed conditioning value \(y_0\), Eq.~(\ref{eq:backward_drift_ratio}) expresses \(R(y_0,\cdot)\) using \(2L+1\) coefficient values. A common nonzero scaling leaves the ratio unchanged, so at most \(2L\) parameters are needed.

Bayes’ rule gives \(R_u(y_0,u)=c_0-\alpha_{y_0}(x(u))\), where \(c_0=(\log p_Y)'(y_0)\). By Assumption~\ref{assumption:anchor}, \(\alpha_{y_0}\) is locally invertible. Thus those parameters, together with \(c_0\), determine the reverse quantile curve \(x(u)\) and its derivative.

A second application of Bayes’ rule gives \(\xi(x(u))=\ell_0-\log r(y_0\mid x(u))-\log x_u(u)\), where \(\ell_0=\log p_Y(y_0)\). Since \(x_u>0\), this determines \(\xi\) on the corresponding \(x\)-interval. Counting the \(2L\) coefficient parameters, \(c_0,\ell_0\), and \(y_0\) gives the upper bound \(2L+3\). The full proof is given in Appendix~\ref{appendix:proof_theorem_1}.
\end{proofsketch}

\begin{remark}
Theorem~\ref{theorem:general_K_fixed_quantile} allows an arbitrary
unknown reverse distortion $\tilde g$, which is not covered by
the prescribed-basis QPE theory~\citep{QPE_2026causal}.
\end{remark}

\subsection{$L = 1$, free basis}
We next consider reverse representations with one non-intercept
component, allowing their quantile basis to vary:
\begin{equation}
\label{eq:backward-free}
\mathcal M_{1,\text{free}}
=
\left\{
Q^\leftarrow:
Q^\leftarrow(y,u)
=\tilde g\bigl(\tilde a(y)+\tilde b(y)\tilde q(u)\bigr)
\text{ for some }\tilde g,\tilde a,\tilde b,\tilde q
\right\}.
\end{equation}
The proof strategy of
Theorem~\ref{theorem:general_K_fixed_quantile}
does not apply directly because $R(y_0,\cdot)$ need not belong to
a finite-parameter family when $\tilde q$ is unrestricted.

\textbf{The key identity} is given in the following proposition:
\begin{proposition}
When $L = 1$,
\begin{equation}
\partial_uR +T(u)R=\rho(y),\label{eq:ODE_R}
\end{equation}
where $T(u) = \tilde{q}''(u)/\tilde{q}'(u)$ and $\rho(y) = \tilde{b}'(y)/\tilde{b}(y)$.
\end{proposition}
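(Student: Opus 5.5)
The plan is to differentiate the closed form of the drift ratio in Eq.~(\ref{eq:backward_drift_ratio}), specialized to $L=1$, with respect to $u$ by the quotient rule. Then I would recognize that the remainder term is $-T(u)R$. For $L=1$ the identity reads
\[
R(y,u)=\frac{\tilde a'(y)+\tilde b'(y)\tilde q(u)}{\tilde b(y)\tilde q'(u)},
\]
with $z^\leftarrow(y,u)=\tilde a(y)+\tilde b(y)\tilde q(u)$. The outer distortion $\tilde g$ has already cancelled in Eq.~(\ref{eq:backward_drift_ratio}), so nothing about $\tilde g$ beyond that cancellation is needed.

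The first step is to make sure every quantity is well defined on the neighborhood where we work. By Assumption~\ref{assumption:smooth}, the functions $\tilde a,\tilde b,\tilde q,\tilde g$ are $C^3$ locally, and
\[
Q^\leftarrow_u=\tilde g'(z^\leftarrow)\,\tilde b(y)\tilde q'(u)>0 .
\]
Hence $\tilde g'\neq0$ and $\tilde b(y)\tilde q'(u)\neq0$ there. This has three consequences:
\begin{itemize}
\item the cancellation in Eq.~(\ref{eq:backward_drift_ratio}) is legitimate;
\item $\rho(y)=\tilde b'(y)/\tilde b(y)$ is defined, since $\tilde b(y)\neq0$;
\item $T(u)=\tilde q''(u)/\tilde q'(u)$ is defined, since $\tilde q'(u)\neq0$.
\end{itemize}
This nonvanishing check is the only point I expect to need care: one must use the positivity of $Q^\leftarrow_u$ rather than assume $\tilde b>0$ and $\tilde q'>0$ separately. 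Once it is done, the rest is mechanical.

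The second step is the differentiation in $u$, holding $y$ fixed. Write $N(y,u)=\tilde a'(y)+\tilde b'(y)\tilde q(u)$ and $D(y,u)=\tilde b(y)\tilde q'(u)$. Then $\partial_u N=\tilde b'(y)\tilde q'(u)$ and $\partial_u D=\tilde b(y)\tilde q''(u)$, so
\[
\partial_u R=\frac{\partial_u N}{D}-\frac{N\,\partial_u D}{D^2}
=\frac{\tilde b'(y)}{\tilde b(y)}-R(y,u)\,\frac{\tilde q''(u)}{\tilde q'(u)}
=\rho(y)-T(u)R(y,u).
\]
Rearranging gives Eq.~(\ref{eq:ODE_R}).

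Finally, I would remark on what the identity buys for the free-basis theorem. It is a first-order linear ODE in $u$ whose coefficient $T$ depends only on $u$ and whose forcing term $\rho$ depends only on $y$. This separable structure is what replaces the finite-parameter description of $R(y_0,\cdot)$ used for Theorem~\ref{theorem:general_K_fixed_quantile}.
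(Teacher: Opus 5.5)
Your proposal is correct and is the paper's proof: the paper applies the same quotient-rule differentiation in $u$ to the $L=1$ drift ratio $R=(\tilde a'(y)+\tilde b'(y)\tilde q(u))/(\tilde b(y)\tilde q'(u))$. Your preliminary check, that $Q^\leftarrow_u=\tilde g'(z^\leftarrow)\tilde b(y)\tilde q'(u)>0$ forces $\tilde b(y)\neq0$ and $\tilde q'(u)\neq0$ so that $T$ and $\rho$ are well defined, is a small justification the paper leaves implicit.
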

\begin{proof}
Since $R =\left(\tilde a'(y)+ \tilde b'(y)\tilde q(u)\right)/\left(
\tilde b(y)\tilde q'(u)\right)$, 
$
\partial_u R = -\tilde{q}''(u)/\tilde{q}'(u)R + \tilde{b}'(y)/\tilde{b}(y)
$.
\end{proof}

\begin{theorem}[Free reverse quantile basis for $L=1$]
\label{theorem:K__1_free_quantile}
Fix the forward conditional density $r$. Every reverse-compatible cause
log-density $\xi\in\mathcal B(r;\mathcal M_{1,\text{free}})$ satisfying both  Assumptions~\ref{assumption:anchor}
and~\ref{assumption:smooth} at some interior point 
agrees, on some nonempty open interval, with a member of a common
family parameterized by at most nine real numbers.
This family depends only on $r$; no reverse quantile basis is
prescribed.
\end{theorem}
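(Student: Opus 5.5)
The plan is to mimic the proof of Theorem~\ref{theorem:general_K_fixed_quantile}. The difference is that we no longer use the finite-parameter form of $R(y_0,\cdot)$, which is unavailable when $\tilde q$ is free. Instead we eliminate the unknown function $T(u)$ from the identity \eqref{eq:ODE_R}, using $y$-derivatives of $R$. The outcome should be a closed, autonomous first-order ODE system in $u$ along the curve $x(u)=Q^\leftarrow(y_0,u)$, with finitely many constants. Its solution then determines $\xi$ on an interval through the second Bayes identity.

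\textbf{Step 1 (Bayes identities in $y$).} Write $c(y)=(\log p_Y)'(y)$, $x(y,u)=Q^\leftarrow(y,u)$, $\alpha_y(x)=\partial_y\log r(y\mid x)$ and $\beta_y(x)=\partial_y\alpha_y(x)$. As in the proof of Theorem~\ref{theorem:general_K_fixed_quantile}, Bayes' rule gives, for all $y$ near $y_0$,
\[
A(y,u):=R_u(y,u)=c(y)-\alpha_y\bigl(x(y,u)\bigr).
\]
Differentiating in $y$ and using $x_y=R\,x_u$, which is the definition of $R$, gives
\[
A_y=c'(y)-\beta_y(x)-\alpha_y'(x)\,R\,x_u .
\]

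\textbf{Step 2 (eliminating $T$).} By \eqref{eq:ODE_R}, $T(u)=(\rho(y)-A(y,u))/R(y,u)$ wherever $R\neq0$, and the left side does not depend on $y$. Differentiating in $y$ and writing $S=R_y$ yields
\[
(\rho'-A_y)R=(\rho-A)S .
\]
Substituting the expressions for $A$ and $A_y$ at $y=y_0$, and using $\alpha_{y_0}'(x_0)\neq0$ (Assumption~\ref{assumption:anchor}), we can solve for $x_u$:
\[
x_u=\frac{\bigl(\rho_1-c_1+\beta_{y_0}(x)\bigr)R-\bigl(\rho_0-c_0+\alpha_{y_0}(x)\bigr)S}{\alpha_{y_0}'(x)\,R^2}.
\]
Here $c_0,c_1$ denote $c(y_0),c'(y_0)$ and $\rho_0,\rho_1$ denote $\rho(y_0),\rho'(y_0)$. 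Together with $R_u=c_0-\alpha_{y_0}(x)$ and $S_u=R_{yu}=A_y$, this gives an autonomous ODE system for the triple $(x,R,S)(u)$. Its right-hand side is built only from $r$ and the constants $y_0,c_0,c_1,\rho_0,\rho_1$, and it is $C^1$ under Assumption~\ref{assumption:smooth}. Because the system is autonomous, we may normalize the base point $u_0$. The initial values $(x,R,S)(u_0)$ then contribute three more constants, and Picard–Lindelöf gives uniqueness of the solution.

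\textbf{Step 3 (recovering $\xi$ and counting).} As before, $\xi(x(u))=\ell_0-\log r(y_0\mid x(u))-\log x_u(u)$ with $\ell_0=\log p_Y(y_0)$. Since $x_u>0$, this determines $\xi$ on an $x$-interval. The parameters are $y_0,c_0,c_1,\rho_0,\rho_1,\ell_0$ plus three initial values, which is nine in total. The family they parameterize depends only on $r$.

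\textbf{Main obstacle.} I expect the main obstacle to be the nondegeneracy needed for the division by $R$. Since $R_u=c_0-\alpha_{y_0}(x(u))$, $x_u>0$ and $\alpha_{y_0}'(x_0)\neq0$, the function $R(y_0,\cdot)$ has nonzero derivative at $u_0$ except possibly at an isolated point. Hence it is not identically zero on any interval. I would therefore move to a nearby point where $R\neq0$ and $\alpha_{y_0}'\neq0$ still hold; both are open conditions. The price is that the anchor point shifts, which is harmless because the theorem only asks for some open interval. Keeping the smoothness needed to differentiate $T$-independence in $y$ relies on Assumption~\ref{assumption:smooth} applied to $\tilde b$ and $\tilde q$.
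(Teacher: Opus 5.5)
Your proposal follows the paper's proof essentially step for step: the same $y$-differentiated Bayes identity, the same elimination of $T$, the same autonomous system for $(x,R,R_y)$, the same shrinking of the $u$-interval to get $R\neq0$, and the same nine-parameter count. Your quotient differentiation is equivalent to the paper's cross-multiplication of $R_u+TR=\rho_0$ and $D_u+TD=\rho_1$.

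There is one algebraic slip in Step 2. Solving your own (correct) identity $(\rho'-A_y)R=(\rho-A)S$ at $y_0$ gives $\alpha_{y_0}'(x)R^2x_u=\bigl(\rho_0-c_0+\alpha_{y_0}(x)\bigr)S-\bigl(\rho_1-c_1+\beta_{y_0}(x)\bigr)R$. This is the negative of your displayed numerator. As written, the true trajectory cannot satisfy your system, since that would force $x_u=0$. With the corrected sign, $S_u=A_y$ simplifies to $\rho_1-\bigl(\rho_0-c_0+\alpha_{y_0}(x)\bigr)S/R$, which matches the paper's system.
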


\begin{proofsketch} The key observation is that the unknown reverse basis $\tilde{q}$ enters Eq.~(\ref{eq:ODE_R}) only through \(T(u)\), which does not depend on \(y\). Fix $y_0$ as in the assumptions
and set $x(u)=Q^\leftarrow(y_0,u)$, $R(u)=R(y_0,u)$, and
$D(u)=R_y(y_0,u)$. Equation~(\ref{eq:ODE_R}) and its $y$-derivative give
$R_u+TR=\rho_0$ and $D_u+TD=\rho_1$, where
$\rho_0=\rho(y_0)$ and $\rho_1=\rho'(y_0)$.
Multiplying the second identity by $R$ and subtracting $D$ times
the first eliminates $T$:
\[
R D_u-D R_u=\rho_1R-\rho_0D.
\]
Bayes' rule expresses $R_u$ in terms of $x$.
Differentiating the Bayes identity in $y$ before restricting to $y_0$
also expresses $D_u$ in terms of $x$, $R$, and $x_u$.
Substitution into the identity above determines $x_u$ on a suitable
subinterval, giving a first-order system for $(x,R,D)$ with no
unknown quantile-basis function remaining.

Under Assumptions~\ref{assumption:anchor} and~\ref{assumption:smooth}, local uniqueness determines the solution
from three initial values and the four constants
$c_0=(\log p_Y)'(y_0)$, $c_1=(\log p_Y)''(y_0)$,
$\rho_0$, and $\rho_1$.
Bayes' rule then recovers $\xi$ on the corresponding $x$-interval
after specifying $\ell_0=\log p_Y(y_0)$.
Neither the system nor the reconstruction depends explicitly on $u$,
so the origin of the local $u$-parameter need not be counted.
Including $\ell_0$ and $y_0$ gives at most $3+4+2=9$ real parameters.
The full proof is given in Appendix~\ref{appendix:proof_theorem_2}.
\end{proofsketch}

\begin{remark}
The elimination in the proof sketch relies on the separation of
$u$ and $y$ into $T(u)$ and $\rho(y)$ in Eq.~(\ref{eq:ODE_R}).
Our argument does not provide a corresponding elimination for
multiple reverse components, so the free-basis result is limited
to $L=1$.
\end{remark}

\begin{remark}
To our knowledge, when both candidate directions are modeled as PNL-HNM ($K=L=1$), Theorem~\ref{theorem:K__1_free_quantile} gives the first generic identifiability result for the nondegenerate PNL-HNM class $Y = g(a(X) + b(X)\varepsilon)$. Earlier identifiability results cover important special cases, including ANMs ($g(z)=z$, $b\equiv 1$)~\citep{Hoyer09NIPS}, PNL models ($b \equiv  1$)~\citep{Zhang09UAI}, and LSNMs ($g(z) = z$)~\citep{Immer_2023}.
\end{remark}

\paragraph{Scope of the assumptions.}
Assumptions~\ref{assumption:anchor} and~\ref{assumption:smooth}
require a common interior point where $\partial_x\partial_y\log p_{X,Y}(x_0,y_0)
=\alpha_{y_0}'(x_0)\ne0$.
This condition excludes uniform-noise location-scale representations in either
direction, even with varying scale and an unknown monotone
transformation, as well as constant-scale exponential and Laplace noise models.
In the ANM special case, the same nonvanishing condition appears
in classical differential-equation arguments for generic
identifiability~\citep{Hoyer09NIPS,Peters14JMLR}. Our assumptions also exclude pure-scale power-law noise models. These restrictions concern the scope of the theorems and do not imply that every excluded mechanism is nonidentifiable. See Appendix~\ref{appendix:assumption_scope} for details.

\section{Causal scoring by transformed quantile-surface fitting}
For a chosen number $K_{\mathrm{fit}}$ of non-intercept components, we consider the following causal score:
\begin{equation}
S_{X\rightarrow Y} = \inf_{\substack{a,b_k,q_k,\\g: \text{ increasing},\\ z(x,\cdot): \text{ increasing}}} \E_X\int_0^{1}\left(Q(X,u) - g\left(a(X) + \sum_{k=1}^{K_{\mathrm{fit}}} b_k(X) q_k(u)\right)\right)^2 \, du,
\label{eq:population_score}
\end{equation}
where $z(x,u) \coloneq a(x) + \sum_{k=1}^{K_{\mathrm{fit}}} b_k(x) q_k(u)$.

Suppose the forward conditional quantile surface belongs to the LRQS class with $K \le K_{\mathrm{fit}}$. Then the causal population score satisfies $S_{X\rightarrow Y} =0$.

In the reverse direction, the same fitted component count corresponds to \(L=K_{\mathrm{fit}}\). Theorem~\ref{theorem:general_K_fixed_quantile} constrains exact reverse representations with a prescribed basis for every fixed finite \(L\), whereas Theorem~\ref{theorem:K__1_free_quantile} allows a free reverse basis when \(L=1\), under the stated assumptions. For a population score whose reverse fitting class is covered by the relevant theorem, assume additionally that zero score implies exact membership in that class. Then the reverse score is positive outside the corresponding reverse-compatible marginal set. The reverse-compatible marginals satisfy the finite-dimensional local restrictions established above, so the reverse score is positive for generic cause marginals in the stated sense.

The population score above, as well as the theoretical results in the
previous section, is defined for continuous conditional quantile surfaces.
The discretization described below is introduced only for finite-sample
estimation.

For a candidate direction $X \rightarrow Y$, we first sort the observations
by the conditioning variable $X$ and partition the sorted observations into
$G$ bins with sizes as equal as possible. Thus, our implementation uses
approximately equal-count binning. For the $j$-th bin $I_j$, the observed conditional quantile matrix is defined by
$
[\mathbf{Q}_{\mathrm{obs}}]_{j,l}
=
\widehat{Q}_{Y \mid X \in I_j}(u_l)$, for $
j=1,\ldots,G$, and $l=1,\ldots,B$. Given the $G \times B$ observed quantile matrix, we estimate the
population causal score by alternating between a low-rank approximation of
the latent surface and estimation of the monotone transformation. The shift
function $a(x)$ becomes a length-$G$ vector $\mathbf{a}$, while
$
a(x)+\sum_{k=1}^{K_{\mathrm{fit}}} b_k(x)q_k(u)
$
becomes a $G \times B$ matrix $\mathbf{Z}$. Further implementation details are
provided in Appendix~\ref{appendix:experiment_details}. The algorithm \texttt{LowRank} computes an empirical approximation to $S_{X\rightarrow Y}$, and the algorithm \texttt{Bivariate-LRQS} chooses the direction with the smaller causal score.

\begin{algorithm}[!h]
\caption{\texttt{LowRank}$(\mathbf{Q},K_{\mathrm{fit}})$}\label{alg:LowRank}
\KwData{$G\times B$ matrix $\mathbf{Q}$, number of non-intercept components $K_{\mathrm{fit}}$, inner and outer loop iterations $T_{\mathrm{in}}$ and $T_{\mathrm{out}}$}
\KwResult{Approximation score $s$}

Initialize $\mathbf{Z} = \mathbf{Q} + \mathbf{\Sigma}$, where $\mathbf{\Sigma}$ is a random Gaussian perturbation matrix.

Repeat for $t = 1, \dots, T_{\mathrm{out}}$:
\begin{itemize}
    \item Find an increasing function $g$ such that $g(\mathbf{Z})$ best approximates $\mathbf{Q}$ under the Frobenius norm: flatten the matrices $\mathbf{Q}$ and $\mathbf{Z}$, sort the pairs $(z,q)$ by the $z$-values, and perform isotonic regression.

    \item Update $\mathbf{Z}$ by an approximate inverse of $g$:
    $
    \mathbf{Z}[j,l] \leftarrow \hat{h}(\mathbf{Q}[j,l])
    $, 
    where $\hat{h}$ denotes the inverse mapping approximated by linear interpolation from the isotonic fit.

    \item Repeat this inner loop for $\tau = 1, \dots, T_{\mathrm{in}}$ to project $\mathbf{Z}$ onto the low-rank structure:
    \begin{itemize}
        \item $\mathbf{a} \leftarrow \mathrm{RowMeans}(\mathbf{Z})$ and subtract the intercept function: $\mathbf{Z}_c \leftarrow \mathbf{Z} - \mathbf{a}\mathbf{1}^{\intercal}$.
        \item Use truncated SVD to update $\mathbf{Z}_c$: $\mathbf{Z}_c \leftarrow \mathrm{SVD}_{K_{\mathrm{fit}}}(\mathbf{Z}_c)$, and rebuild $\mathbf{Z} \leftarrow \mathbf{Z}_c + \mathbf{a}\mathbf{1}^{\intercal}$.
        \item Make each row of $\mathbf{Z}$ an increasing sequence by isotonic projection.
        \item Rescale $\mathbf{Z}$ so that its overall mean is $0$ and variance is $1$.
    \end{itemize}
\end{itemize}

Estimate the final increasing function $g$ from the flattened pairs $(z,q)$ of $\mathbf{Z}$ and $\mathbf{Q}$ by isotonic regression, and calculate the score $s = \norm{\mathbf{Q} - g(\mathbf{Z})}_{F}$.

Return the score $s$.
\end{algorithm}

\begin{algorithm}[!h]
\caption{\texttt{Bivariate-LRQS}}\label{alg:Bivariate_LRQS}
\KwData{Data matrix of $(X,Y)$, number of non-intercept components $K_{\mathrm{fit}}$, number of initializations $m$}

Standardize the data.

Build the conditional quantile matrix $\mathbf{Q}_{X\rightarrow Y}$ of $Y\mid X=x$ and compute the minimum score over $m$ independent runs to reduce sensitivity to local optima: $
s_{X\rightarrow Y} = \min_{1 \le i \le m} \texttt{LowRank}(\mathbf{Q}_{X\rightarrow Y},K_{\mathrm{fit}})$.

Build the conditional quantile matrix $\mathbf{Q}_{Y\rightarrow X}$ of $X\mid Y=y$ and compute the minimum score over $m$ independent runs: $
s_{Y\rightarrow X} = \min_{1 \le i \le m} \texttt{LowRank}(\mathbf{Q}_{Y\rightarrow X},K_{\mathrm{fit}})$.

If $s_{X \rightarrow Y} < s_{Y \rightarrow X}$, output $X$ as the parent; otherwise, output $Y$ as the parent.
\end{algorithm}

\section{Experiments}
\label{sec:experiments}

\subsection{Experimental setup}

\textbf{Custom benchmarks:}
To clarify the theoretical limitations of existing methods and evaluate the
robustness of our proposed approaches, we generated custom benchmarks consisting
of two categories, as summarized in Table~\ref{tab:custom_benchmarks}.
For each benchmark variant, we generated 100 independent cause-effect pairs
with a sample size of $n=1000$.
The cause variable $x$ is sampled from a Gaussian distribution
$\mathcal{N}(0,2)$.
For each model, we employed three types of noise components $\varepsilon$,
all normalized to have mean $0$ and variance $1$:
Gaussian $\mathcal{N}(0,1)$,
uniform $\mathcal{U}(-\sqrt{3},\sqrt{3})$,
and beta, where $V\sim\mathrm{Beta}(2,2)$ is scaled as
$(V-0.5)\times\sqrt{20}$.

\textit{1. Structural complexity ($K_{\mathrm{gen}}>1$):}
To evaluate cases with complex causal mechanisms without nonlinear
observational distortion, the DGP is
$
y = a(x) + \sum_{k=1}^{K_{\mathrm{gen}}} b_k(x)q_k(u)
$, 
where $a \sim\mathcal{GP}$ and $b_k(x)=|f_k(x)|$ with
$f_k\sim\mathcal{GP}$.
The noise basis functions $q_k(u)$ are standardized odd polynomials.
We set $K_{\mathrm{gen}}\in\{2,3,4,5\}$, resulting in 12 variants
(4 component counts $\times$ 3 noise types). This benchmark isolates conditional shape variation without an observation-level transformation, allowing comparison of the methods in the absence of nonlinear measurement distortion.

\textit{2. Strong nonlinear distortions (PNL):}
The DGP is the PNL-HNM model $y=g(z)$,
where $z$ follows either an ANM,
$z=f(x)+\varepsilon$, or an LSNM,
$z=f(x)+\sigma(x)\varepsilon$.
We consider five transformations:
(a) identity, $g(z)=z$;
(b) cube, $g(z)=z^3$;
(c) sigmoid, $g(z)=1/(1+\exp(-z))$;
(d) exponential, $g(z)=\exp(z)$; and
(e) hyperbolic tangent, $g(z)=\tanh(z)$.
This category consists of 30 variants.

Table~\ref{tab:custom_benchmarks} displays representative results for the
Gaussian-noise settings, including the exponential and hyperbolic-tangent
distortions.
The reported Avg.~acc. and Time (s), however, are macro-averaged over all
42 custom benchmark variants (12 structural-complexity variants and 30 PNL
variants).
Complete results for the 42 variants are provided in
Appendix~\ref{appendix:experiment_full_result}.

\textbf{Existing benchmarks:}
We evaluate all 24 bivariate benchmark datasets used in
\citet{QPE_2026causal}.
Table~\ref{tab:main_results} reports 12 representative datasets:
(i) AN and LS from \citet{bQCD_2020};
(ii) SIM and SIM-c from \citet{mooij2016distinguishing};
(iii) Cha and Net from \citet{guyon2019cause};
(iv) Per and Sig from \citet{xi2025distinguishing};
(v) Qd-V and NN-V from \citet{QPE_2026causal};
(vi) Tue~\citep{mooij2016distinguishing}; and
(vii) D4-s1~\citep{marbach2009generating}.
Although Table~\ref{tab:main_results} displays these 12 datasets,
its Avg.~acc. and Time (s) are macro-averaged over all 24 benchmark datasets.

\textbf{Baselines and evaluation protocol:}
We compare LRQS against ANM~\citep{zheng2024causal},
DIVOT~\citep{tu2022optimal},
CVEL~\citep{xi2025distinguishing},
QPE-k, and three QPE-f variants, which are fixed, poly, and lowrank, from
\citet{QPE_2026causal}.
For Table~\ref{tab:custom_benchmarks}, we report results obtained by running
these baselines on the same custom benchmark datasets.
For the existing-benchmark comparison in Table~\ref{tab:main_results},
we rerun all methods rather than citing benchmark values from prior work.
No dataset-specific hyperparameter tuning is performed for this main
comparison; instead, a single default configuration for each method is used
across all 24 datasets.
Implementation and computational details are provided in Appendix~\ref{appendix:comp_resource_implementation_detail}.

\textbf{Proposed method variants:}
We evaluate two configurations of LRQS:
\textit{fix-g}, which fixes the transformation to the identity mapping
($g(z)=z$), and
\textit{est-g}, which jointly estimates the unknown increasing transformation
$g$ together with the low-rank quantile surface.
Unless otherwise stated, both variants use the default settings described in
Appendix~\ref{appendix:experiment_details}.

\subsection{Main results}

Tables~\ref{tab:custom_benchmarks} and \ref{tab:main_results} summarize the causal direction identification accuracy and average execution time for the custom and existing benchmarks, respectively.

\begin{table*}[ht]
\centering
\caption{
Accuracy comparison on custom benchmarks highlighting theoretical limitations
of existing methods. The displayed columns show representative Gaussian-noise
settings, while Avg.~acc. and Time (s) are averaged over all 42 custom benchmark
variants.
}
\label{tab:custom_benchmarks}
\resizebox{\textwidth}{!}{
\begin{tabular}{lcccccccccc}
\toprule
& \multicolumn{4}{c}{
\begin{tabular}[c]{@{}c@{}}
Structural complexity\\
($g(z)=z, K_{\mathrm{gen}}>1$)
\end{tabular}}
& \multicolumn{4}{c}{
\begin{tabular}[c]{@{}c@{}}
Strong nonlinear\\
distortions (PNL)
\end{tabular}}
& \multicolumn{1}{c}{Overall}
& \\
\cmidrule(lr){2-5}
\cmidrule(lr){6-9}
\cmidrule(lr){10-10}
Method
& $K_{\mathrm{gen}}=2$
& $K_{\mathrm{gen}}=3$
& $K_{\mathrm{gen}}=4$
& $K_{\mathrm{gen}}=5$
& AN (Exp)
& AN (Tanh)
& LS (Exp)
& LS (Tanh)
& \begin{tabular}[c]{@{}c@{}}Avg.~acc.\\(all 42)\end{tabular}
& Time (s) \\
\midrule
ANM
& 0.44 & 0.48 & 0.43 & 0.40
& 0.21 & 0.27 & 0.26 & 0.33
& 0.421 & 7.335 \\

DIVOT
& 0.04 & 0.00 & 0.00 & 0.00
& 0.00 & 0.31 & 0.00 & 0.08
& 0.217 & 1.261 \\

CVEL
& 0.59 & 0.91 & 0.98 & \textbf{1.00}
& \textbf{1.00} & 0.70 & 0.94 & 0.66
& 0.730 & 2.055 \\

QPE-k
& 0.36 & 0.13 & 0.09 & 0.11
& 0.16 & 0.10 & 0.32 & 0.45
& 0.561 & 0.062 \\

QPE-f (fixed)
& 0.34 & 0.28 & 0.40 & 0.37
& 0.68 & 0.13 & 0.90 & 0.47
& 0.443 & 34.536 \\

QPE-f (poly)
& 0.33 & 0.18 & 0.17 & 0.15
& 0.27 & 0.07 & 0.53 & 0.44
& 0.390 & 32.588 \\

QPE-f (lowrank)
& 0.36 & 0.37 & 0.63 & 0.64
& 0.61 & 0.31 & 0.85 & 0.54
& 0.535 & 37.966 \\

\midrule

LRQS (est-g)
& 0.94 & \textbf{0.97} & 0.96 & \textbf{1.00}
& \textbf{1.00} & \textbf{0.91} & \textbf{1.00} & \textbf{0.96}
& \textbf{0.952} & 2.222 \\

LRQS (fix-g)
& \textbf{0.95} & \textbf{0.97} & \textbf{0.99} & \textbf{1.00}
& \textbf{1.00} & 0.59 & \textbf{1.00} & 0.78
& 0.893 & \textbf{0.050} \\

\bottomrule
\end{tabular}
}
\end{table*}

\begin{table*}[ht]
\centering
\caption{
Accuracy and average computational time (seconds per pair) on 12 representative
bivariate benchmark datasets using default parameter settings.
Avg.~acc. and Time (s) are averaged over all 24 benchmark datasets.
}
\label{tab:main_results}
\resizebox{\textwidth}{!}{
\begin{tabular}{lcccccccccccccc}
\toprule
Method
& AN
& LS
& SIM
& SIM-c
& Cha
& Net
& Per
& Sig
& Qd-V
& NN-V
& Tue
& D4-s1
& \begin{tabular}[c]{@{}c@{}}Avg.~acc.\\(all 24)\end{tabular}
& Time (s) \\
\midrule

ANM
& \textbf{1.00}
& 0.42
& 0.74
& \textbf{0.79}
& \textbf{0.73}
& 0.73
& 0.63
& 0.28
& 0.82
& 0.64
& 0.55
& 0.58
& 0.609
& 22.354 \\

DIVOT
& \textbf{1.00}
& 0.72
& 0.73
& 0.70
& 0.52
& 0.80
& 0.90
& 0.61
& 0.37
& 0.40
& 0.46
& \textbf{0.75}
& 0.612
& 4.057 \\

CVEL
& 0.25
& 0.18
& 0.64
& 0.62
& 0.67
& 0.51
& \textbf{1.00}
& 0.69
& \textbf{0.83}
& \textbf{0.88}
& 0.34
& 0.42
& 0.602
& 2.561 \\

QPE-k
& 0.99
& \textbf{1.00}
& \textbf{0.83}
& \textbf{0.79}
& 0.60
& 0.89
& 0.77
& \textbf{0.89}
& 0.42
& 0.53
& 0.54
& 0.58
& 0.741
& 0.067 \\

QPE-f (fixed)
& \textbf{1.00}
& 0.98
& 0.74
& 0.70
& 0.54
& 0.89
& 0.95
& 0.47
& 0.76
& 0.75
& 0.61
& 0.54
& 0.739
& 35.502 \\

QPE-f (poly)
& 0.97
& 0.99
& 0.75
& 0.66
& 0.50
& \textbf{0.91}
& 0.96
& 0.54
& 0.64
& 0.72
& 0.61
& 0.54
& \textbf{0.744}
& 31.672 \\

QPE-f (lowrank)
& 0.51
& 0.52
& 0.74
& 0.66
& 0.55
& 0.63
& 0.83
& 0.73
& 0.73
& 0.77
& 0.48
& 0.33
& 0.631
& 3.862 \\

\midrule

LRQS (est-g)
& 0.98
& 0.96
& 0.61
& 0.52
& 0.61
& 0.71
& 0.64
& 0.54
& 0.57
& 0.61
& 0.69
& 0.50
& 0.654
& 2.707 \\

LRQS (fix-g)
& \textbf{1.00}
& \textbf{1.00}
& 0.69
& 0.77
& 0.70
& 0.85
& 0.74
& 0.52
& 0.73
& 0.79
& \textbf{0.78}
& 0.42
& 0.740
& \textbf{0.053} \\

\bottomrule
\end{tabular}
}
\end{table*}

\textbf{Lightweight.} Both fix-g and est-g variants are very fast compared to neural-based methods such as QPE-f.

\textbf{Robustness to structural complexity.}
As shown in Table~\ref{tab:custom_benchmarks}, QPE-k and the QPE-f variants struggle to achieve consistently high accuracy under multirank structural complexity ($K_{\mathrm{gen}} > 1$), whereas CVEL performs strongly for larger $K_{\mathrm{gen}}$. In contrast, both of our proposed methods (fix-g and est-g) maintain near-perfect accuracy across the evaluated values of $K_{\mathrm{gen}}$. Notably, fix-g achieves this robustness while operating at a speed comparable to the fastest baselines.

\textbf{Fitted rank and effective complexity.} The fitted number of components need not match the generating number. In the Gaussian benchmark with $K_{\mathrm{gen}}=5$, both variants achieve 99–100\% accuracy across $K_{\mathrm{fit}}\in\{2,3,4\}$. Spectral analysis shows that the first two components capture over 99\% of the forward row-centered quantile matrix’s energy in the median case, helping to explain the effectiveness of the default $K_{\mathrm{fit}}=2$. Increasing the fitted rank further can reduce accuracy, suggesting that $K_{\mathrm{fit}}$ should be viewed as a regularization parameter (Appendix~\ref{appendix:rank_sensitivity}).

\textbf{Resistance to strong nonlinear distortions.}
Under strong nonlinear distortions such as \textit{exp} and \textit{tanh}, several baseline methods experience substantial degradation in accuracy. While our fix-g also struggles with extreme saturation (e.g., \textit{tanh}), est-g successfully estimates and unwarps these distortions, sustaining high accuracy. Across all 42 custom benchmark variants, est-g achieves the highest average accuracy, further supporting the benefit of learning the monotone unwarping under strong observation-level distortions.

\textbf{Performance on existing benchmarks.}
On the standard datasets (Table~\ref{tab:main_results}), fix-g
achieves perfect accuracy on AN and LS and the highest accuracy
among all evaluated methods on the real-world \textit{Tue} dataset.
Across all 24 benchmarks, fix-g remains competitive with the
strongest baselines under the common default-parameter protocol.
The additional flexibility of est-g is particularly beneficial
under strong nonlinear observation distortions.

\subsection{Additional experimental analyses}

\textbf{Sensitivity to optimization and estimation settings.} Additional initializations yield only modest accuracy gains, and reconstruction scores and directional decisions stabilize after a few outer iterations in a representative example (Appendix~\ref{appendix:optimization_sensitivity}). Analyses of sample size, grid resolution, and quantile range show stable performance near the default settings, although overly coarse discretization reduces accuracy (Appendix~\ref{appendix:discretization_sensitivity}). 

\textbf{Directional score separation.} Across the 42 custom benchmark settings, normalized gaps between forward and reverse LRQS scores are typically separated from zero. Incorrect decisions tend to have smaller absolute gaps than correct decisions, indicating weaker directional separation (Appendix~\ref{appendix:directional_score_gap}). 

\textbf{Role of the low-rank constraint.} Removing the effective rank constraint yields zero reconstruction scores in both directions for every pair in the two evaluated settings. The low-rank restriction is therefore essential for directional discrimination in these experiments (Appendix~\ref{appendix:low_rank_ablation}). 

\textbf{Non-monotone transformations.} Both variants achieve perfect accuracy for the tested values $K_{\mathrm{fit}}\in\{1,2,3,4\}$ across the evaluated family of non-monotone transformations. These results demonstrate empirical robustness to this particular violation of the increasing-transformation assumption (Appendix~\ref{appendix:monotonicity_violation}).

\section{Limitations and conclusions}
\label{sec:limitations-conclusion}

\textbf{Limitations.}
LRQS is a bivariate method and does not address multivariate graphs, latent
confounding, or selection bias. Its generic-identifiability analysis characterizes exceptional cause
marginals through local restrictions. The theory covers any prescribed finite reverse quantile basis, while the free-basis result is proved for one reverse component. The
practical score also depends on conditional quantile estimation, the chosen number of non-intercept components
$K_{\mathrm{fit}}$, discretization, and a nonconvex alternating fit. 

\textbf{Conclusion.}
We introduced Low-Rank Quantile Surfaces, a model in which
the causal-direction conditional quantile surface becomes low rank after an
unknown monotone transformation. LRQS extends location-scale and post-nonlinear
heteroscedastic noise models, possesses generic identifiability,
and leads to a simple nonparametric causal score. Experiments show strong
performance under higher-rank shape variation and nonlinear observation
distortions, suggesting latent low-rank quantile structure as an effective
inductive bias for causal discovery.

\section*{Acknowledgment}
This work was partially supported by the Japan Science and Technology Agency (JST) under CREST Grant Number JPMJCR22D2 and by the Japan Society for the Promotion of Science (JSPS) under KAKENHI Grant Number JP24K20741.

\newpage
\appendix

\section{Detailed related work}
\label{appendix:related_works}

\paragraph{Bivariate cause-effect identification.}
Because $X\to Y$ and $Y\to X$ impose no distinct
conditional-independence constraints, bivariate observational discovery
requires additional asymmetry assumptions
\citep{mooij2016distinguishing,guyon2019cause}.

\paragraph{Transport, velocity, and score-based approaches.}
DIVOT connects functional causal models to optimal transport and derives
directional criteria from transport dynamics \citep{tu2022optimal}.
Causal velocity models (CVEL) treat the cause as a time-like parameter,
relating counterfactual velocity fields to the joint score function
\citep{xi2025distinguishing}.
For multivariate nonlinear additive Gaussian-noise models, SCORE recovers
a causal ordering from the Hessian of the joint log-density
\citep{rolland22a}.

\paragraph{Generative, supervised, and likelihood-based methods.}
Causal generative neural networks (CGNNs) compare generative fits using
maximum mean discrepancy \citep{goudet2018learning}, whereas the
randomized causation coefficient learns causal decisions from labeled
cause-effect pairs \citep{lopez2015randomized}.
LOCI provides feature-map and neural-network estimators for LSNMs
\citep{Immer_2023}, and QPE-f estimates quantile partial effects using
normalizing flows \citep{QPE_2026causal}.
Bayesian model selection offers a complementary approach:
\citet{dhir2024bivariate} compare flexible bivariate models by marginal
likelihood under priors expressing independent causal mechanisms.

\paragraph{Broader connections.}
Additive models address unobserved causal and backdoor
paths \citep{pham_2026_aistats} and graph integration across
non-identical variable sets \citep{suzuki_2026};
location-scale models additionally accommodate heteroscedasticity
with hidden variables \citep{khan_2026_arxiv}.
These approaches build on regression and residual-independence
criteria.
Related downstream tasks include root-cause analysis of time-series
prediction errors \citep{yokoyama_2025} and optimal-transport-based
counterfactual distribution estimation \citep{pham_2024}, which targets
outcome distributions rather than causal direction.

\section{Beyond latent location-scale mechanisms}\label{appendix:beyond_pnl_hnm}
For $K = 1$, $h(Q(x,u)) = a(x) + b(x)q(u)$: after removing location and scale, every latent conditional quantile curve $u \mapsto h(Q(x,u))$ has the same standardized shape $q(u)$. Hence conditional skewness, tail ratios, and other location-scale invariants (when the corresponding moments exist) cannot vary with the cause $x$. (A nonlinear $g$ can still make the \emph{observed} quantile curve $u \mapsto Q(x,u)$ change shape with $x$, but only in the restricted way expressible by a PNL-HNM, a location-scale family pushed through one fixed nonlinearity.)

For $K = 2$, $h(Q(x,u)) = a(x) + b_1(x)q_1(u) + b_2(x)q_2(u)$: the relative coefficient $\lambda(x) = b_2(x)/b_1(x)$ can vary with $x$, so the standardized latent conditional quantile curve $u \mapsto h(Q(x,u))$ can change by shape variations beyond location and scale on this latent scale. For example:
\begin{itemize}
\item if $q_1$ is a symmetric Gaussian quantile and $q_2$ is an asymmetric Gumbel quantile, varying $\lambda(x)$ changes conditional skewness;
\item if $q_1$ 
 is Gaussian and $q_2$ a heavy-tailed Student-$t$ quantile, varying $\lambda(x)$ changes relative tail thickness.
\end{itemize}

\section{Theoretical details}\label{appendix:proof}
\subsection{Proof of Proposition~\ref{proposition:factorization}}
For fixed $x$, define $Z_x(u)=a(x)+\sum_{k=1}^{K}b_k(x)q_k(u)$. By Assumption~\ref{assumption:monotone}, $Z_x$ is strictly increasing. Since $U\sim \text{Unif}(0,1)$, the conditional quantile of $Z_x(U)$ is $Z_x(u)$. Because $g$ is strictly increasing, monotone transformations preserve quantile order, so $Q_{Y\mid X=x}(u) = g(Z_x(u))$. 

\subsection{Proof of Theorem~\ref{theorem:general_K_fixed_quantile}}\label{appendix:proof_theorem_1}
We first establish two consequences of Bayes’ rule used in both theorem proofs. The first relates the reverse drift ratio to the forward conditional score; the second reconstructs the cause log-density from a reverse quantile curve.

\begin{lemma}
\label{lem:bayes}
Write $c(y) = \partial_y \log  p_{Y}(y)$, and $\alpha(x,y) = \partial_y \log r (y \mid x)$. Then
\begin{equation}
\label{eq:RuBayes}
R_u(y,u)=c(y)-\alpha(Q^\leftarrow(y,u),y),
\end{equation}
\begin{equation}
\label{eq:xi-recon}
\xi(Q^{\leftarrow}(y,u))
=
\log  p_{Y}(y)-\log r(y\mid Q^{\leftarrow}(y,u))-\log Q_u^{\leftarrow}(y,u).
\end{equation}
\end{lemma}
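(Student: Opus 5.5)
The plan is to derive both identities from the single relation between a reverse quantile curve and the reverse conditional density. By Bayes' rule,
\[
p_{X\mid Y}(x\mid y)=\frac{e^{\xi(x)}r(y\mid x)}{p_Y(y)} .
\]
Under Assumption~\ref{assumption:smooth}, everything is locally $C^3$ with positive densities. The defining relation $F_{X\mid Y=y}(Q^\leftarrow(y,u))=u$ then holds identically in $(y,u)$ near the anchor point, and I will differentiate it.

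For Eq.~(\ref{eq:xi-recon}), differentiate the defining relation in $u$ to get
\[
p_{X\mid Y}(Q^\leftarrow(y,u)\mid y)\,Q^\leftarrow_u(y,u)=1 .
\]
Taking logarithms (legitimate since $Q_u^\leftarrow>0$) and substituting the Bayes expression gives
\[
\xi(Q^\leftarrow)+\log r(y\mid Q^\leftarrow)-\log p_Y(y)+\log Q_u^\leftarrow=0,
\]
which rearranges to Eq.~(\ref{eq:xi-recon}).

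For Eq.~(\ref{eq:RuBayes}), differentiate the defining relation in $y$:
\[
\partial_y F_{X\mid Y=y}(x)\big|_{x=Q^\leftarrow}+p_{X\mid Y}(Q^\leftarrow\mid y)\,Q^\leftarrow_y=0 .
\]
Dividing by the density and using $1/p_{X\mid Y}=Q^\leftarrow_u$ gives
\[
R(y,u)=\frac{Q^\leftarrow_y}{Q^\leftarrow_u}=-\frac{\partial_y F_{X\mid Y=y}(Q^\leftarrow)}{p_{X\mid Y}(Q^\leftarrow\mid y)} .
\]
Next, differentiate $R$ in $u$ via the chain rule through $x=Q^\leftarrow(y,u)$. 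Write $\phi(x,y)=-\partial_yF_{X\mid Y=y}(x)/p_{X\mid Y}(x\mid y)$, so that $R=\phi(Q^\leftarrow,y)$ and $R_u=\phi_x\,Q_u^\leftarrow=\phi_x/p_{X\mid Y}$. Using $\partial_x\partial_yF=\partial_y p_{X\mid Y}$, we get
\[
\phi_x=-\partial_y\log p_{X\mid Y}+\frac{\partial_yF\,\partial_x p_{X\mid Y}}{p_{X\mid Y}^2}\cdot\frac{1}{1}\ \text{(up to arrangement)},
\]
and after multiplying by $1/p_{X\mid Y}$ the second term must cancel.

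The expected obstacle is this cancellation, and the bookkeeping is cleaner done another way. Set $p(x,y)=p_{X\mid Y}(x\mid y)$ and differentiate the identity $p(Q^\leftarrow,y)\,Q^\leftarrow_u=1$ in $y$ rather than differentiating $\phi$. This gives
\[
(\partial_x p\,Q^\leftarrow_y+\partial_y p)\,Q^\leftarrow_u+p\,Q^\leftarrow_{uy}=0 .
\]
Differentiating the same identity in $u$ gives
\[
\partial_x p\,(Q^\leftarrow_u)^2+p\,Q^\leftarrow_{uu}=0 .
\]
Now compute directly
\[
R_u=\frac{Q^\leftarrow_{yu}}{Q^\leftarrow_u}-\frac{Q^\leftarrow_y Q^\leftarrow_{uu}}{(Q^\leftarrow_u)^2}.
\]
Substituting $Q^\leftarrow_{yu}$ and $Q^\leftarrow_{uu}$ from the two identities makes the $\partial_x p$ terms cancel, leaving
\[
R_u=-\partial_y\log p(Q^\leftarrow,y).
\]
Finally, $\partial_y\log p_{X\mid Y}(x\mid y)=\alpha(x,y)-c(y)$, because $\xi$ does not depend on $y$. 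This yields Eq.~(\ref{eq:RuBayes}). The only care needed is ensuring the mixed partial $Q^\leftarrow_{uy}$ exists and commutes, which $C^3$ regularity guarantees.
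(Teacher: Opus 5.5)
Your final argument is correct and follows the paper's skeleton. You differentiate $F_{X\mid Y=y}(Q^\leftarrow(y,u))=u$ and take logarithms of $Q^\leftarrow_u=1/p^\leftarrow(Q^\leftarrow\mid y)$ to get Eq.~(\ref{eq:xi-recon}), exactly as the paper does. You then reduce Eq.~(\ref{eq:RuBayes}) to $R_u=-\partial_y\log p^\leftarrow(Q^\leftarrow\mid y)$ and finish with Bayes' rule.

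You differ only in how you obtain $R_u$, and here the paper uses precisely the route you abandoned. It stalled for you because of an algebra slip. From $F_y+p^\leftarrow Q^\leftarrow_y=0$ you get $Q^\leftarrow_y=-F_y/p^\leftarrow$, and dividing by $Q^\leftarrow_u=1/p^\leftarrow$ gives $R=-F_y(Q^\leftarrow,y)$, not $-F_y/p^\leftarrow$; what you wrote is $Q^\leftarrow_y$. With the correct expression, $R_u=-F_{xy}(Q^\leftarrow,y)\,Q^\leftarrow_u=-\partial_y p^\leftarrow/p^\leftarrow$ in one line, with nothing to cancel. With your $\phi=-F_y/p^\leftarrow$ the extra term would not in fact cancel, so the claim that it ``must cancel'' is false and that paragraph should be dropped.

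Your fallback is valid. You differentiate $p^\leftarrow(Q^\leftarrow,y)\,Q^\leftarrow_u=1$ in $y$ and in $u$, then apply the quotient rule to $Q^\leftarrow_y/Q^\leftarrow_u$; the $\partial_x p^\leftarrow$ terms do cancel, leaving $-\partial_y p^\leftarrow/p^\leftarrow$. This route has the small merit of never touching $\partial_y F$: it uses only the conditional density and second-order derivatives of the quantile map. The paper's route is shorter and needs only the single mixed-partial identity $F_{xy}=\partial_y p^\leftarrow$.
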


\begin{proof}
Let $F(x,y)=F_{X\mid Y=y}(x)$ and $p^{\leftarrow}(x\mid y)=p_{X\mid Y=y}(x)$.
Differentiating $F(Q^\leftarrow(y,u),y)=u$ gives
$Q^\leftarrow_u=1/p^\leftarrow(Q^\leftarrow \mid y)$ and
$R=-F_y(Q^\leftarrow,y)$, where subscripts on $F$ and $p^\leftarrow$
denote partial derivatives before evaluation at $x=Q^\leftarrow(y,u)$.
Consequently, $R_u=-p^{\leftarrow}_y(Q^\leftarrow \mid y)/p^{\leftarrow}(Q^\leftarrow \mid y)$.
Substituting Bayes' formula
$p^{\leftarrow}(x \mid y)=e^{\xi(x)}r(y\mid x)/ p_{Y}(y)$ yields Eq.~(\ref{eq:RuBayes}); taking logarithms in the identity
for $Q^\leftarrow_u$ gives Eq.~(\ref{eq:xi-recon}).
\end{proof}

Fix an interior point $(x_0,y_0)$ satisfying
Assumptions~\ref{assumption:anchor} and~\ref{assumption:smooth},
and set $x(u)=Q^\leftarrow(y_0,u)$.
Choose an interval $J_x$ around $x_0$ on which
$\alpha_{y_0}'$ remains nonzero, and a quantile interval
$I_u$ around $u_0$ such that $x(I_u)\subset J_x$.
Set $I_x=x(I_u)$.
Below, $\alpha_{y_0}^{-1}$ denotes the inverse of
$\alpha_{y_0}$ restricted to $J_x$.
At $y_0$, the reverse LRQS formula for $R$ in Eq.~(\ref{eq:backward_drift_ratio}) involves
$2L+1$ coefficients. Its value is unchanged by a common nonzero
scaling of these coefficients. Since at least one denominator
coefficient is nonzero, normalizing that coefficient leaves at most
$2L$ real parameters. Denote them collectively by $\theta$ and the
resulting function by $R_\theta(u)$.

Set $c_0=c(y_0)$.
Equation~(\ref{eq:RuBayes}) and the local invertibility
of $\alpha_{y_0}$ give
\begin{equation}
x(u)=\alpha_{y_0}^{-1}\bigl(c_0-R_\theta'(u)\bigr),
\qquad u\in I_u.
\label{eq:lrqs-app-fixed-inverse}
\end{equation}
Thus both $x(u)$ and its derivative $x_u(u)$ are determined by
$(\theta,c_0)$ for fixed $y_0$.

With $\ell_0=\log p_Y(y_0)$,
Eq.~(\ref{eq:xi-recon}) becomes
$\xi(x(u))=\ell_0-\log r(y_0\mid x(u))-\log x_u(u)$.
Since $x_u>0$, the map $u\mapsto x(u)$ is invertible locally, so this
identity determines $\xi|_{I_x}$ without introducing another
unknown function. The local inverse branches are determined by
the fixed mechanism $r$. Counting the at most $2L$ coefficients in
$\theta$, the two scalars $c_0,\ell_0$, and the anchor $y_0$ gives
the claimed upper bound of $2L+3$ real parameters.

\subsection{Proof of Theorem~\ref{theorem:K__1_free_quantile}}
\label{appendix:proof_theorem_2}

\begin{proof}
Fix $\xi\in\mathcal B(r;\mathcal M_{1,\text{free}})$. Choose an interior point satisfying
Assumptions~\ref{assumption:anchor}
and~\ref{assumption:smooth}, and let $I_u$ be a sufficiently small
quantile interval around the corresponding $u_0$.
By Eq.~(\ref{eq:RuBayes}),
$R_{uu}(y_0,u)=
-\alpha_{y_0}'(Q^\leftarrow(y_0,u))Q_u^\leftarrow(y_0,u)$
is nonzero on $I_u$.
Hence $R(y_0,\cdot)$ is not identically zero there.
After choosing a smaller interval, which need not contain $u_0$, we may therefore assume
$R(y_0,u)\ne0$ throughout.
Set $I_x=Q^\leftarrow(y_0,I_u)$.

Use the functions $x(u)=Q^{\leftarrow}(y_0,u)$,
$R(u)=R(y_0,u)$, and $D(u)=R_y(y_0,u)$. Set $c_0=(\log p_Y)'(y_0)$,
$c_1=(\log p_Y)''(y_0)$, $\rho_0=\rho(y_0)$, and
$\rho_1=\rho'(y_0)$.
Also define
$\beta_{y_0}(x)=\left.\partial_y^2\log r(y\mid x)\right|_{y=y_0}$.
Both $\alpha_{y_0}$ and $\beta_{y_0}$ are fixed by $r$ and $y_0$;
primes on these functions denote differentiation in $x$.

To obtain an equation for $D_u$, we differentiate
Eq.~(\ref{eq:RuBayes}) with respect to $y$ while holding $u$ fixed.
Since $y$ enters both arguments of $\alpha(Q^\leftarrow(y,u),y)$,
the chain rule gives
\[
\partial_y\bigl[\alpha(Q^\leftarrow(y,u),y)\bigr]
=
\alpha_x(Q^\leftarrow(y,u),y)\,Q_y^\leftarrow(y,u)
+\alpha_y(Q^\leftarrow(y,u),y).
\]
At $y=y_0$, the quantile derivative is
$Q_y^\leftarrow(y_0,u)=R(u)x_u(u)$, while
$\alpha_x(x,y_0)=\alpha_{y_0}'(x)$ and
$\alpha_y(x,y_0)=\beta_{y_0}(x)$.
Thus, evaluating the Bayes identity and its $y$-derivative at
$y=y_0$, and using $D_u(u)=R_{yu}(y_0,u)$, gives
\begin{align}
R_u&=c_0-\alpha_{y_0}(x),
\label{eq:R_u_anchor}\\
D_u&=c_1-\beta_{y_0}(x)-\alpha_{y_0}'(x)R x_u.
\label{eq:R_yu_anchor}
\end{align}

On the other hand, Eq.~(\ref{eq:ODE_R}) and its $y$-derivative give
\[
R_u+TR=\rho_0,\qquad D_u+TD=\rho_1,
\]
because $T$ depends only on $u$.
Together with Eq.~(\ref{eq:R_u_anchor}), the first identity yields
$T=(\rho_0-c_0+\alpha_{y_0}(x))/R$.
Substituting this into the second identity and comparing with
Eq.~(\ref{eq:R_yu_anchor}) gives
\begin{equation}
\label{eq:main-system}
\begin{aligned}
x_u&=
\frac{R\bigl(c_1-\beta_{y_0}(x)-\rho_1\bigr)
      +\bigl(\rho_0-c_0+\alpha_{y_0}(x)\bigr)D}
     {\alpha_{y_0}'(x)R^2},\\
R_u&=c_0-\alpha_{y_0}(x),\\
D_u&=\rho_1-
\frac{\bigl(\rho_0-c_0+\alpha_{y_0}(x)\bigr)D}{R}.
\end{aligned}
\end{equation}
Thus no unknown quantile-basis function remains in the system.
Its right-hand side is $C^1$ wherever $R\ne0$ and
$\alpha_{y_0}'(x)\ne0$, as ensured by
Assumptions~\ref{assumption:anchor} and~\ref{assumption:smooth}.
For fixed $y_0$ and a chosen $u_\ast\in I_u$, local uniqueness
therefore determines $(x,R,D)$ from the three initial values
$(x_\ast,R_\ast,D_\ast)=(x,R,D)(u_\ast)$ and the four scalars
$(c_0,c_1,\rho_0,\rho_1)$.

On the actual solution, $x_u=Q_u^{\leftarrow}(y_0,u)>0$.
With $\ell_0=\log p_Y(y_0)$, Eq.~(\ref{eq:xi-recon}) becomes
$\xi(x(u))=\ell_0-\log r(y_0\mid x(u))-\log x_u(u)$.
Since $u\mapsto x(u)$ is locally invertible, this determines
$\xi|_{I_x}$, after shrinking $I_u$ and $I_x=x(I_u)$ if necessary.
Neither the system nor this reconstruction depends explicitly on
$u$. Hence changing the origin of the local $u$-parameter only
reparametrizes the same function of $x$, and $u_\ast$ contributes
no additional parameter.

Counting $(x_\ast,R_\ast,D_\ast,c_0,c_1,\rho_0,\rho_1,\ell_0,y_0)$
gives at most nine real parameters.
Allowing their admissible values defines a family depending only
on $r$ and containing all the required restrictions.
Additional conditions for a valid reverse-compatible joint density
can only restrict this family.
\end{proof}

\subsection{Scope and exclusions of the local assumptions}
\label{appendix:assumption_scope}

We explain the roles of
Assumptions~\ref{assumption:anchor} and~\ref{assumption:smooth}
and derive several cases outside their joint scope.
All density derivatives below are evaluated in open neighborhoods
where the relevant densities are positive and the required
derivatives exist.

\paragraph{Implications of Assumption~\ref{assumption:smooth}.}
Purely discrete, deterministic, or singular conditional laws are
outside this setting.
The assumption is local: bounded support or nonsmoothness away
from the selected neighborhoods does not by itself violate it.
Nor is boundedness of quantile derivatives as $u\to0$ or $u\to1$
required.

\paragraph{Assumption~\ref{assumption:anchor} is symmetric.}
Write $\alpha(x,y)=\partial_y\log r(y\mid x)$, so that
$\alpha_x(x_0,y_0)=\alpha_{y_0}'(x_0)$.
The two factorizations of the joint density give
\begin{equation}
\label{eq:scope-symmetry}
\alpha_x(x,y)
=
\partial_x\partial_y\log p_{X,Y}(x,y)
=
\partial_y\partial_x\log p_{X\mid Y=y}(x).
\end{equation}
Thus Assumption~\ref{assumption:anchor} is symmetric in $X$ and $Y$,
despite being stated through the forward conditional density.
Moreover, differentiating Eq.~(\ref{eq:RuBayes}) in $u$ gives
\begin{equation}
\label{eq:scope-drift}
R_{uu}(y,u)
=
-\alpha_x(Q^\leftarrow(y,u),y)\,
 Q_u^\leftarrow(y,u).
\end{equation}
Because $Q_u^\leftarrow>0$ at regular points, the nondegeneracy
condition $\alpha_x(x_0,y_0) \ne 0$ is equivalent to $R_{uu}(y_0,u_0) \ne 0$, i.e., nonzero curvature of the reverse drift ratio in the quantile coordinate at the corresponding point.

The condition detects dependence within smooth regions of the
density.
Indeed, on an open rectangle, $\alpha_x\equiv0$ implies
$\log p_{X,Y}(x,y)=A(x)+B(y)$ by integration.
If the density is positive and $C^2$ throughout the interior of
a rectangular support, this factorization implies independence.
Consequently, every dependent distribution in that setting has
a point where Assumption~\ref{assumption:anchor} holds.

\paragraph{For PNL-HNMs.}
Consider $Y=g(a(X)+b(X)\varepsilon)$, with $\varepsilon\cind X$,
$b>0$, and $h=g^{-1}$.
Work locally where the functions are sufficiently differentiable
and $h'>0$.
Set $e=(h(y)-a(x))/b(x)$ and $\nu=\log p_\varepsilon$.
The conditional density satisfies
$r(y\mid x)=p_\varepsilon(e)h'(y)/b(x)$.
Differentiating its logarithm with respect to $y$, using
$e_y=h'(y)/b(x)$, gives
\begin{equation}
\label{eq:alpha}
\alpha(x,y)
=
\frac{h'(y)}{b(x)}\nu'(e)
+\frac{h''(y)}{h'(y)}.
\end{equation}
Differentiating with respect to $x$, using
$e_x=-(a'(x)+b'(x)e)/b(x)$, then gives
\begin{equation}
\label{eq:scope-locationscale}
\alpha_x(x,y)
=
-\frac{h'(y)}{b(x)^2}
\left[
\bigl(a'(x)+b'(x)e\bigr)\nu''(e)
+b'(x)\nu'(e)
\right].
\end{equation}
The outer transformation contributes only the positive factor
$h'(y)$, so it cannot remove the degeneracies identified below.
By Eq.~(\ref{eq:scope-symmetry}), the same conclusions apply to
reverse representations after exchanging the variables.
\paragraph{For Gaussian noise.} Standard Gaussian noise gives
$\alpha_x(x,y)=h'(y)(a'(x)+2b'(x)e)/b(x)^2$.
At a conditioning value where $a'$ and $b'$ are not both zero,
this expression is nonzero for some noise value.

\paragraph{Uniform noise, including varying scale.}
For uniform noise on any nondegenerate interval,
$\nu'=\nu''=0$ throughout the support interior.
Equation~(\ref{eq:scope-locationscale}) therefore gives
$\alpha_x=0$, regardless of the location function, scale function,
or monotone transformation.
Hence no point satisfies both assumptions.

In the reverse direction, for uniform reverse noise, $\tilde q$ is affine, so
$T=\tilde q''/\tilde q'=0$. Equation~(\ref{eq:ODE_R}) therefore gives
$R_u=\rho(y)$, hence $R_{uu}=0$. Equation~(\ref{eq:scope-drift}) again gives $\alpha_x=0$.
Thus a forward mechanism satisfying
Assumption~\ref{assumption:anchor} cannot admit a regular
uniform-noise reverse PNL-HNM.
\paragraph{Constant-scale exponential and Laplace noise.}
For standard exponential noise, $\nu'(e)=-1$ and $\nu''(e)=0$
on $e>0$.
Thus
$\alpha_x(x,y)=h'(y)b'(x)/b(x)^2$.
When $b$ is constant, Assumption~\ref{assumption:anchor} fails
throughout the support interior, even if $a$ is nonconstant.

For standard Laplace noise,
$\nu(e)=-|e|-\log 2$, so $\nu''(e)=0$ away from $e=0$.
With constant $b$, Eq.~(\ref{eq:scope-locationscale}) again gives
$\alpha_x=0$ at all smooth points.
At $e=0$, the density has a kink and the required local regularity
fails.
Hence constant-scale Laplace models have no point satisfying
both assumptions.

Unlike uniform noise, these exclusions need not persist under varying scale.
For Laplace noise, away from its kink,
$\alpha_x(x,y)=
h'(y)b'(x)\operatorname{sign}(e)/b(x)^2$.
Thus exponential or Laplace noise can satisfy
Assumption~\ref{assumption:anchor} at a regular point where
$b'(x)\ne0$.

\paragraph{Pure-scale power-law noise models.}
For a pure-scale mechanism with $a\equiv \mathrm{const}$, suppose the noise
log-density has the form
$\nu(t)=\mathrm{const}+\gamma\log t$
on its positive support.
Then $t\nu''(t)+\nu'(t)=0$, and
Eq.~(\ref{eq:scope-locationscale}) gives $\alpha_x=0$
for every scale function $b$.

Examples include
$\varepsilon\sim\mathrm{Beta}(\eta,1)$, with density
$p_\varepsilon(t)=\eta t^{\eta-1}$ on $0<t<1$, and Pareto noise
with density $p_\varepsilon(t)=\eta t^{-\eta-1}$ on $t>1$,
where $\eta>0$.
For nonconstant $b$, these mechanisms can describe dependent
variables, but their interior log-densities remain separable
in $x$ and $y$.
They therefore fall outside Assumption~\ref{assumption:anchor},
including after a monotone transformation.

\paragraph{Piecewise-affine reverse quantile bases.}
The exclusion also extends beyond one-component models.
Suppose all prescribed reverse bases are piecewise affine with
finitely many knots.
On any interval avoiding their combined knots, the reverse
latent surface has the form
$z^\leftarrow(y,u)=A(y)+B(y)u$.
At regular points $B(y)>0$, and
$R(y,u)=(A'(y)+B'(y)u)/B(y)$ is affine in $u$.
Hence Eq.~(\ref{eq:scope-drift}) gives $\alpha_x=0$.
Genuine knots violate the required smoothness; if a knot disappears in the resulting surface, continuity still gives
$R_{uu}=0$ there.
Thus exact reverse representations built from such
bases cannot satisfy the two assumptions jointly.

\paragraph{Connection to classical ANM results.}
For an ANM $Y=a(X)+\varepsilon$, with $\varepsilon\cind X$ and
$\nu=\log p_\varepsilon$, direct differentiation gives
$\alpha_x(x,y)=-a'(x)\nu''(y-a(x))$.
The differential-equation arguments of
\citet[Theorem~1]{Hoyer09NIPS} and
\citet[Condition~19 and Proposition~21]{Peters14JMLR}
use this nonvanishing product; their finite-dimensional genericity
statements additionally require it to be nonzero for all but
countably many cause values along some fixed conditioning slice.
These particular genericity results do not cover uniform,
exponential, or Laplace noise: $\nu''$ vanishes on every smooth
piece of the positive-density region, while the required
differentiability fails at the Laplace kink.
Our condition requires only one regular point satisfying both
assumptions, but shares this nondegeneracy ingredient.

\section{Additional experimental results}

\subsection{Additional results}
\label{appendix:experiment_full_result}

We provide the complete experimental results across all 42 custom mechanism
variants and the full suite of 24 existing bivariate benchmark datasets.
For the existing benchmarks, we report both a controlled comparison using a
single default configuration for each method and a complementary comparison
under dataset-specific hyperparameter tuning.

\textbf{Comprehensive evaluation on structural complexity.}
Table~\ref{tab:appendix_multirank} details the performance under increasing
structural complexity ($K_{\mathrm{gen}} \in \{2,3,4,5\}$) across Gaussian, uniform, and beta
noise distributions. LRQS remains consistently strong across the different
complexity levels and noise distributions. CVEL is also highly competitive in
several higher-complexity settings, whereas QPE-k and the QPE-f variants show
larger performance degradation in a number of multirank settings. Overall,
the results support the robustness of the low-rank quantile surface
representation to substantial conditional shape variation.

\textbf{Consistent robustness against nonlinear distortions.}
Tables~\ref{tab:appendix_pnl_an} and \ref{tab:appendix_pnl_hnm} present the
complete results under nonlinear observation distortions for both post-nonlinear additive
noise models (PNL-AN) and post-nonlinear heteroscedastic noise models (PNL-HNM).
The \textit{est-g} variant remains highly accurate across a broad range of
noise distributions and nonlinear transformations, with particularly strong
performance under severe nonlinear distortions. While several competing
methods are effective in specific settings, the overall results demonstrate
the benefit of explicitly estimating the unknown monotone transformation when
observation-level distortions are substantial.

\textbf{Extended results on existing benchmarks.}
Tables~\ref{tab:appendix_qpe_benchmarks_1} and
\ref{tab:appendix_qpe_benchmarks_2} report the complete results on all 24
bivariate benchmark datasets using the same default-parameter protocol as in
Table~\ref{tab:main_results}. A single default configuration for each method is
used across all datasets, without dataset-specific tuning. Under this
controlled setting, LRQS (\textit{fix-g}) remains competitive with the
strongest baselines while retaining its low computational cost.

For completeness, Tables~\ref{tab:appendix_qpe_benchmarks_tuned_1} and
\ref{tab:appendix_qpe_benchmarks_tuned_2} additionally report a
dataset-specific tuning comparison following the evaluation style of
\citet{QPE_2026causal}. Baseline values are cited from
\citet{QPE_2026causal}, while LRQS is evaluated using dataset-specific
hyperparameter tuning. This complementary comparison shows the performance
achievable when hyperparameters are adapted to individual datasets, whereas
the default-parameter comparison above provides a more controlled assessment
under a common evaluation protocol.

\begin{table*}[!ht]
\centering
\caption{
Detailed accuracy comparison on multirank structural-complexity benchmarks.
Gaussian, uniform, and beta denote the noise distributions.
}
\label{tab:appendix_multirank}
\resizebox{\textwidth}{!}{
\begin{tabular}{lccccccccccccc}
\toprule
Method
& \multicolumn{4}{c}{Gaussian}
& \multicolumn{4}{c}{Uniform}
& \multicolumn{4}{c}{Beta}
& Time (s) \\
\cmidrule(lr){2-5}
\cmidrule(lr){6-9}
\cmidrule(lr){10-13}
& $K_{\mathrm{gen}}=2$ & $K_{\mathrm{gen}}=3$ & $K_{\mathrm{gen}}=4$ & $K_{\mathrm{gen}}=5$
& $K_{\mathrm{gen}}=2$ & $K_{\mathrm{gen}}=3$ & $K_{\mathrm{gen}}=4$ & $K_{\mathrm{gen}}=5$
& $K_{\mathrm{gen}}=2$ & $K_{\mathrm{gen}}=3$ & $K_{\mathrm{gen}}=4$ & $K_{\mathrm{gen}}=5$
& \\
\midrule
ANM
& 0.44 & 0.48 & 0.43 & 0.40
& 0.39 & 0.49 & 0.59 & 0.60
& 0.46 & 0.58 & 0.50 & 0.54
& 8.420 \\

DIVOT
& 0.04 & 0.00 & 0.00 & 0.00
& 0.18 & 0.03 & 0.00 & 0.00
& 0.04 & 0.00 & 0.00 & 0.00
& 1.348 \\

CVEL
& 0.59 & 0.91 & 0.98 & \textbf{1.00}
& 0.43 & 0.80 & \textbf{0.98} & \textbf{1.00}
& 0.62 & \textbf{0.94} & \textbf{0.98} & \textbf{1.00}
& 1.994 \\

QPE-k
& 0.36 & 0.13 & 0.09 & 0.11
& \textbf{0.98} & 0.77 & 0.47 & 0.23
& 0.69 & 0.22 & 0.07 & 0.04
& 0.057 \\

QPE-f (fixed)
& 0.34 & 0.28 & 0.40 & 0.37
& 0.53 & 0.30 & 0.30 & 0.29
& 0.46 & 0.25 & 0.27 & 0.20
& 37.048 \\

QPE-f (poly)
& 0.33 & 0.18 & 0.17 & 0.15
& 0.54 & 0.35 & 0.31 & 0.37
& 0.49 & 0.29 & 0.30 & 0.21
& 27.550 \\

QPE-f (lowrank)
& 0.36 & 0.37 & 0.63 & 0.64
& 0.45 & 0.30 & 0.53 & 0.64
& 0.38 & 0.39 & 0.57 & 0.58
& 38.597 \\

LRQS (est-g)
& 0.94 & \textbf{0.97} & 0.96 & \textbf{1.00}
& 0.96 & \textbf{0.93} & 0.82 & 0.71
& \textbf{0.95} & 0.88 & 0.90 & 0.91
& 2.038 \\

LRQS (fix-g)
& \textbf{0.95} & \textbf{0.97} & \textbf{0.99} & \textbf{1.00}
& \textbf{0.98} & 0.84 & 0.69 & 0.73
& 0.94 & 0.88 & 0.89 & 0.98
& \textbf{0.046} \\
\bottomrule
\end{tabular}
}
\end{table*}

\begin{table*}[!ht]
\centering
\caption{
Detailed accuracy comparison on PNL-AN benchmarks.
Gaussian, uniform, and beta denote the noise distributions.
Id, Sig, Exp, and Tanh denote identity, sigmoid, exponential, and hyperbolic tangent transformations.
}
\label{tab:appendix_pnl_an}
\resizebox{\textwidth}{!}{
\begin{tabular}{lcccccccccccccccc}
\toprule
Method
& \multicolumn{5}{c}{Gaussian}
& \multicolumn{5}{c}{Uniform}
& \multicolumn{5}{c}{Beta}
& Time (s) \\
\cmidrule(lr){2-6}
\cmidrule(lr){7-11}
\cmidrule(lr){12-16}
& Id & Cube & Sig & Exp & Tanh
& Id & Cube & Sig & Exp & Tanh
& Id & Cube & Sig & Exp & Tanh
& \\
\midrule
ANM
& \textbf{1.00} & 0.23 & 0.63 & 0.21 & 0.27
& \textbf{1.00} & 0.20 & 0.58 & 0.20 & 0.17
& \textbf{1.00} & 0.20 & 0.59 & 0.17 & 0.24
& 5.795 \\

DIVOT
& 0.96 & 0.00 & 0.89 & 0.00 & 0.31
& \textbf{1.00} & 0.00 & 0.91 & 0.00 & 0.24
& \textbf{1.00} & 0.00 & 0.93 & 0.00 & 0.33
& 1.127 \\

CVEL
& 0.02 & \textbf{1.00} & 0.04 & \textbf{1.00} & 0.70
& 0.76 & \textbf{0.99} & 0.87 & \textbf{1.00} & \textbf{1.00}
& 0.19 & \textbf{1.00} & 0.52 & \textbf{1.00} & \textbf{0.99}
& 2.178 \\

QPE-k
& 0.90 & 0.34 & 0.75 & 0.16 & 0.10
& \textbf{1.00} & 0.74 & \textbf{1.00} & 0.49 & 0.43
& \textbf{1.00} & 0.60 & \textbf{0.97} & 0.29 & 0.25
& 0.067 \\

QPE-f (fixed)
& 0.83 & 0.35 & 0.24 & 0.68 & 0.13
& 0.27 & 0.13 & 0.30 & 0.55 & 0.23
& 0.33 & 0.26 & 0.17 & 0.70 & 0.24
& 32.326 \\

QPE-f (poly)
& 0.87 & 0.14 & 0.15 & 0.27 & 0.07
& 0.37 & 0.12 & 0.34 & 0.49 & 0.30
& 0.41 & 0.14 & 0.17 & 0.49 & 0.28
& 34.431 \\

QPE-f (lowrank)
& 0.77 & 0.43 & 0.48 & 0.61 & 0.31
& 0.60 & 0.22 & 0.64 & 0.72 & 0.51
& 0.51 & 0.29 & 0.50 & 0.69 & 0.52
& 37.496 \\

LRQS (est-g)
& 0.90 & 0.99 & 0.95 & \textbf{1.00} & \textbf{0.91}
& 0.93 & 0.95 & 0.94 & 0.98 & 0.93
& 0.96 & 0.96 & 0.95 & 0.98 & 0.89
& 2.356 \\

LRQS (fix-g)
& 0.94 & 0.82 & \textbf{0.96} & \textbf{1.00} & 0.59
& 0.98 & 0.56 & 0.91 & 0.97 & 0.69
& 0.93 & 0.58 & 0.95 & \textbf{1.00} & 0.74
& \textbf{0.051} \\
\bottomrule
\end{tabular}
}
\end{table*}

\begin{table*}[!ht]
\centering
\caption{
Detailed accuracy comparison on PNL-HNM benchmarks.
Gaussian, uniform, and beta denote the noise distributions.
Id, Sig, Exp, and Tanh denote identity, sigmoid, exponential, and hyperbolic tangent transformations.
}
\label{tab:appendix_pnl_hnm}
\resizebox{\textwidth}{!}{
\begin{tabular}{lcccccccccccccccc}
\toprule
Method
& \multicolumn{5}{c}{Gaussian}
& \multicolumn{5}{c}{Uniform}
& \multicolumn{5}{c}{Beta}
& Time (s) \\
\cmidrule(lr){2-6}
\cmidrule(lr){7-11}
\cmidrule(lr){12-16}
& Id & Cube & Sig & Exp & Tanh
& Id & Cube & Sig & Exp & Tanh
& Id & Cube & Sig & Exp & Tanh
& \\
\midrule
ANM
& 0.36 & 0.34 & 0.35 & 0.26 & 0.33
& 0.35 & 0.36 & 0.34 & 0.32 & 0.36
& 0.41 & 0.31 & 0.36 & 0.32 & 0.34
& 8.006 \\

DIVOT
& 0.25 & 0.00 & 0.36 & 0.00 & 0.08
& 0.30 & 0.00 & 0.32 & 0.02 & 0.10
& 0.36 & 0.00 & 0.34 & 0.04 & 0.09
& 1.324 \\

CVEL
& 0.18 & 0.99 & 0.20 & 0.94 & 0.66
& 0.24 & 0.99 & 0.39 & 0.95 & 0.74
& 0.17 & \textbf{0.98} & 0.22 & 0.92 & 0.76
& 1.982 \\

QPE-k
& 0.94 & 0.48 & 0.79 & 0.32 & 0.45
& \textbf{1.00} & 0.76 & 0.94 & 0.65 & 0.60
& 0.98 & 0.63 & 0.91 & 0.39 & 0.53
& 0.060 \\

QPE-f (fixed)
& 0.92 & 0.68 & 0.72 & 0.90 & 0.47
& 0.45 & 0.73 & 0.31 & 0.79 & 0.34
& 0.60 & 0.69 & 0.46 & 0.79 & 0.34
& 34.736 \\

QPE-f (poly)
& 0.90 & 0.55 & 0.65 & 0.53 & 0.44
& 0.50 & 0.58 & 0.30 & 0.66 & 0.40
& 0.62 & 0.58 & 0.41 & 0.62 & 0.36
& 34.774 \\

QPE-f (lowrank)
& 0.59 & 0.73 & 0.56 & 0.85 & 0.54
& 0.36 & 0.75 & 0.38 & 0.76 & 0.46
& 0.41 & 0.82 & 0.37 & 0.75 & 0.52
& 37.930 \\

LRQS (est-g)
& \textbf{1.00} & \textbf{1.00} & \textbf{0.99} & \textbf{1.00} & \textbf{0.96}
& \textbf{1.00} & \textbf{1.00} & \textbf{1.00} & 0.98 & \textbf{0.99}
& \textbf{0.99} & \textbf{0.98} & \textbf{0.99} & \textbf{1.00} & \textbf{0.97}
& 2.234 \\

LRQS (fix-g)
& \textbf{1.00} & 0.97 & 0.95 & \textbf{1.00} & 0.78
& \textbf{1.00} & 0.95 & 0.97 & \textbf{1.00} & 0.83
& \textbf{0.99} & 0.88 & 0.97 & \textbf{1.00} & 0.77
& \textbf{0.051} \\
\bottomrule
\end{tabular}
}
\end{table*}

\begin{table*}[!ht]
\centering
\caption{
Detailed accuracy comparison on the first group of bivariate benchmark datasets.
All methods are evaluated using a single default configuration across datasets.
}
\label{tab:appendix_qpe_benchmarks_1}
\resizebox{\textwidth}{!}{
\begin{tabular}{lcccccccccccc}
\toprule
Method
& AN & AN-s & LS & LS-s & MNU & SIM & SIM-c & SIM-g & SIM-ln & Tue & Cha & Net \\
\midrule

ANM
& \textbf{1.00} & \textbf{1.00} & 0.42 & 0.25 & 0.25
& 0.74 & \textbf{0.79} & 0.68 & 0.72 & 0.55
& \textbf{0.73} & 0.73 \\

DIVOT
& \textbf{1.00} & \textbf{1.00} & 0.72 & 0.34 & 0.91
& 0.73 & 0.70 & 0.68 & 0.60 & 0.46
& 0.52 & 0.80 \\

CVEL
& 0.25 & 0.09 & 0.18 & 0.25 & 0.70
& 0.64 & 0.62 & 0.80 & 0.65 & 0.34
& 0.67 & 0.51 \\

QPE-k
& 0.99 & 0.88 & \textbf{1.00} & 0.78 & \textbf{1.00}
& \textbf{0.83} & \textbf{0.79} & \textbf{0.83} & 0.68 & 0.54
& 0.60 & 0.89 \\

QPE-f (fixed)
& \textbf{1.00} & 0.94 & 0.98 & \textbf{1.00} & 0.93
& 0.74 & 0.70 & 0.54 & 0.79 & 0.61
& 0.54 & 0.89 \\

QPE-f (poly)
& 0.97 & \textbf{1.00} & 0.99 & \textbf{1.00} & 0.99
& 0.75 & 0.66 & 0.61 & \textbf{0.84} & 0.61
& 0.50 & \textbf{0.91} \\

QPE-f (lowrank)
& 0.51 & 0.17 & 0.52 & 0.56 & 0.89
& 0.74 & 0.66 & 0.55 & 0.72 & 0.48
& 0.55 & 0.63 \\

\midrule

LRQS (est-g)
& 0.98 & 0.81 & 0.96 & 0.73 & 0.74
& 0.61 & 0.52 & 0.71 & 0.56 & 0.69
& 0.61 & 0.71 \\

LRQS (fix-g)
& \textbf{1.00} & 0.93 & \textbf{1.00} & 0.92 & 0.84
& 0.69 & 0.77 & 0.72 & 0.64 & \textbf{0.78}
& 0.70 & 0.85 \\

\bottomrule
\end{tabular}
}
\end{table*}

\begin{table*}[!ht]
\centering
\caption{
Detailed accuracy comparison on the second group of bivariate benchmark datasets.
All methods are evaluated using a single default configuration across datasets.
}
\label{tab:appendix_qpe_benchmarks_2}
\resizebox{\textwidth}{!}{
\begin{tabular}{lcccccccccccc}
\toprule
Method
& Multi & D4-s1 & D4-s2a & D4-s2b & D4-s2c
& Per & Sig & Vex & Qd-V & Sig-V & RbF-V & NN-V \\
\midrule

ANM
& 0.60 & 0.58 & 0.62 & \textbf{0.61} & 0.58
& 0.63 & 0.28 & 0.17 & 0.82 & 0.76 & 0.47 & 0.64 \\

DIVOT
& 0.36 & \textbf{0.75} & 0.60 & 0.59 & 0.54
& 0.90 & 0.61 & 0.05 & 0.37 & 0.46 & 0.59 & 0.40 \\

CVEL
& 0.87 & 0.42 & 0.47 & 0.41 & 0.47
& \textbf{1.00} & 0.69 & 0.91 & \textbf{0.83}
& \textbf{0.90} & \textbf{0.90} & \textbf{0.88} \\

QPE-k
& \textbf{0.88} & 0.58 & \textbf{0.67} & \textbf{0.61}
& \textbf{0.64} & 0.77 & \textbf{0.89} & 0.63
& 0.42 & 0.67 & 0.68 & 0.53 \\

QPE-f (fixed)
& 0.76 & 0.54 & 0.54 & 0.53 & 0.47
& 0.95 & 0.47 & 0.93 & 0.76 & 0.69 & 0.68 & 0.75 \\

QPE-f (poly)
& 0.81 & 0.54 & 0.54 & 0.57 & 0.50
& 0.96 & 0.54 & 0.96 & 0.64 & 0.64 & 0.61 & 0.72 \\

QPE-f (lowrank)
& 0.70 & 0.33 & 0.66 & 0.53 & 0.55
& 0.83 & 0.73 & \textbf{0.97} & 0.73
& 0.67 & 0.69 & 0.77 \\

\midrule

LRQS (est-g)
& 0.84 & 0.50 & 0.53 & 0.51 & 0.51
& 0.64 & 0.54 & 0.78 & 0.57 & 0.60 & 0.44 & 0.61 \\

LRQS (fix-g)
& 0.86 & 0.42 & 0.64 & 0.54 & 0.53
& 0.74 & 0.52 & 0.70 & 0.73 & 0.79 & 0.65 & 0.79 \\

\bottomrule
\end{tabular}
}
\end{table*}

\begin{table*}[!ht]
\centering
\caption{
Detailed accuracy comparison on the first group of bivariate benchmark datasets
under dataset-specific tuning.
Baseline values are cited from \citet{QPE_2026causal}, while LRQS results are
obtained using dataset-specific hyperparameter tuning.
}
\label{tab:appendix_qpe_benchmarks_tuned_1}
\resizebox{\textwidth}{!}{
\begin{tabular}{lcccccccccccc}
\toprule
Method
& AN & AN-s & LS & LS-s & MNU & SIM & SIM-c & SIM-g & SIM-ln & Tue & Cha & Net \\
\midrule

ANM
& 0.43 & 0.47 & 0.46 & 0.45 & 0.40
& 0.45 & 0.49 & 0.41 & 0.46 & 0.65
& 0.41 & 0.47 \\

DIVOT
& 0.62 & 0.69 & 0.45 & 0.69 & \textbf{1.00}
& 0.68 & 0.47 & 0.60 & 0.63 & 0.38
& 0.44 & 0.49 \\

CVEL
& \textbf{1.00} & 0.98 & 0.98 & 0.93 & 0.94
& 0.63 & 0.72 & \textbf{0.90} & 0.76 & 0.64
& 0.68 & 0.62 \\

QPE-k
& 0.99 & 0.88 & \textbf{1.00} & 0.78 & \textbf{1.00}
& 0.83 & 0.79 & 0.83 & 0.68 & 0.54
& 0.60 & 0.89 \\

QPE-f
& \textbf{1.00} & \textbf{1.00} & \textbf{1.00} & \textbf{0.99}
& \textbf{1.00}
& \textbf{0.88} & \textbf{0.88} & 0.86 & \textbf{0.92} & 0.70
& \textbf{0.85} & 0.86 \\

\midrule

LRQS (est-g)
& \textbf{1.00} & \textbf{1.00} & \textbf{1.00} & 0.78 & 0.98
& 0.75 & 0.74 & 0.82 & 0.83 & 0.72
& 0.66 & 0.84 \\

LRQS (fix-g)
& \textbf{1.00} & \textbf{1.00} & \textbf{1.00} & \textbf{0.99}
& \textbf{1.00}
& 0.80 & 0.80 & 0.81 & 0.82 & \textbf{0.82}
& 0.70 & \textbf{0.90} \\

\bottomrule
\end{tabular}
}
\end{table*}

\begin{table*}[!ht]
\centering
\caption{
Detailed accuracy comparison on the second group of bivariate benchmark datasets
under dataset-specific tuning.
Baseline values are cited from \citet{QPE_2026causal}, while LRQS results are
obtained using dataset-specific hyperparameter tuning.
}
\label{tab:appendix_qpe_benchmarks_tuned_2}
\resizebox{\textwidth}{!}{
\begin{tabular}{lcccccccccccc}
\toprule
Method
& Multi & D4-s1 & D4-s2a & D4-s2b & D4-s2c
& Per & Sig & Vex & Qd-V & Sig-V & RbF-V & NN-V \\
\midrule

ANM
& 0.48 & 0.50 & 0.48 & 0.46 & 0.48
& 0.49 & 0.44 & 0.39 & 0.49 & 0.50 & 0.43 & 0.48 \\

DIVOT
& 0.34 & 0.50 & 0.57 & 0.55 & 0.55
& 0.97 & 0.82 & 0.05 & 0.32 & 0.44 & 0.63 & 0.47 \\

CVEL
& \textbf{0.97} & 0.67 & 0.51 & 0.58 & 0.58
& \textbf{1.00} & 0.84 & \textbf{0.96} & \textbf{0.91}
& \textbf{0.94} & 0.92 & 0.87 \\

QPE-k
& 0.88 & 0.58 & 0.67 & 0.61 & \textbf{0.64}
& 0.77 & 0.89 & 0.63 & 0.42 & 0.67 & 0.68 & 0.53 \\

QPE-f
& 0.96 & \textbf{0.79} & 0.71 & \textbf{0.62} & 0.60
& \textbf{1.00} & \textbf{0.90} & 0.91 & \textbf{0.91}
& 0.91 & \textbf{0.94} & \textbf{0.90} \\

\midrule

LRQS (est-g)
& 0.86 & 0.50 & 0.59 & 0.55 & 0.54
& 0.99 & 0.78 & 0.87 & 0.61 & 0.75 & 0.62 & 0.62 \\

LRQS (fix-g)
& 0.93 & 0.67 & \textbf{0.73} & 0.61 & 0.57
& 0.99 & 0.81 & 0.70 & 0.76 & 0.84 & 0.69 & 0.79 \\

\bottomrule
\end{tabular}
}
\end{table*}

\subsection{Sensitivity to fitted rank and effective spectral rank}
\label{appendix:rank_sensitivity}

\paragraph{Sensitivity to the fitted rank.}
We first examine the sensitivity of LRQS to the fitted number of
non-intercept basis functions, denoted by $K_{\mathrm{fit}}$.
We use the same 100 cause-effect pairs from the structural complexity
benchmark with Gaussian noise and $K_{\mathrm{gen}}=5$, and vary
$K_{\mathrm{fit}}\in\{2,3,4,5\}$ while keeping the other settings fixed
($n=1000$, $G=B=10$, $T_{\mathrm{in}}=5$, and
$T_{\mathrm{out}}=10$; for est-g, $m=5$).

In addition to causal-direction accuracy, we report the normalized
directional score gap. For
$s_{\mathrm{reverse}}+s_{\mathrm{forward}}>0$, define
\begin{equation}
\label{eq:directional-score-gap}
\delta
=
\frac{s_{\mathrm{reverse}}-s_{\mathrm{forward}}}
     {s_{\mathrm{reverse}}+s_{\mathrm{forward}}}.
\end{equation}
Positive values favor the true forward direction, negative values
favor the reverse direction, and values close to zero indicate
weak directional separation.
Table~\ref{tab:appendix_k_sensitivity} summarizes the results.

\begin{table}[!ht]
\centering
\caption{
Sensitivity of LRQS to the fitted rank $K_{\mathrm{fit}}$ on the same 100
structural complexity pairs with Gaussian noise and $K_{\mathrm{gen}}=5$.
The score gap is reported as median $[Q_1,Q_3]$.
``Decision flips'' denotes the fraction of pairs whose predicted direction
differs from that obtained with $K_{\mathrm{fit}}=2$.
}
\label{tab:appendix_k_sensitivity}
\begin{tabular}{lccccc}
\toprule
Method
& $K_{\mathrm{fit}}$
& Accuracy
& Median $\delta$ $[Q_1,Q_3]$
& Exact ties
& Decision flips \\
\midrule
fix-g
& 2 & 1.00 & 0.499 $[0.402,\,0.604]$ & 0 & -- \\
& 3 & 1.00 & 0.563 $[0.462,\,0.665]$ & 0 & 0\% \\
& 4 & 1.00 & 0.589 $[0.502,\,0.690]$ & 0 & 0\% \\
& 5 & 1.00 & 0.597 $[0.492,\,0.684]$ & 0 & 0\% \\
\midrule
est-g
& 2 & 1.00 & 0.543 $[0.423,\,0.665]$ & 0 & -- \\
& 3 & 1.00 & 0.616 $[0.469,\,0.737]$ & 0 & 0\% \\
& 4 & 0.99 & 0.652 $[0.463,\,0.822]$ & 0 & 1\% \\
& 5 & 0.79 & 0.871 $[0.466,\,1.000]^\dagger$ & 9 & 21\% \\
\bottomrule
\end{tabular}
\vspace{1mm}

{\footnotesize
$^\dagger$For est-g with $K_{\mathrm{fit}}=5$, the score-gap summary is
computed over the 91 non-tied pairs; $\delta$ is undefined for the nine exact
ties with $s_{\mathrm{forward}}=s_{\mathrm{reverse}}=0$.}
\end{table}

The fix-g variant is insensitive to the fitted rank over the evaluated range:
all 100 pairs are correctly oriented for every
$K_{\mathrm{fit}}\in\{2,3,4,5\}$, with no decision flips relative to
$K_{\mathrm{fit}}=2$.
The est-g variant is also stable for moderate changes in rank, retaining
99--100\% accuracy for $K_{\mathrm{fit}}\in\{2,3,4\}$.
However, increasing the fitted rank to $K_{\mathrm{fit}}=5$ reduces the
accuracy to 79\%, produces nine exact ties, and changes the predicted direction
for 21\% of the pairs relative to $K_{\mathrm{fit}}=2$.

These results show that increasing $K_{\mathrm{fit}}$ does not necessarily
improve causal identification.
In finite-sample estimation, $K_{\mathrm{fit}}$ is therefore better viewed as
a regularization parameter controlling the flexibility of the fitted quantile
surface than as an estimate of the number of generating components.
A conservatively small fitted rank preserves the directional asymmetry in this
experiment, whereas excessive flexibility can weaken causal discrimination.

\paragraph{Effective spectral rank.}
We next examine the effective rank of the empirical quantile surfaces in the
same structural complexity setting with Gaussian noise and
$K_{\mathrm{gen}}=5$.
For each of the 100 cause-effect pairs, we construct the $10\times10$
empirical quantile matrix in each direction and subtract its row means.
Let $\sigma_1\geq\sigma_2\geq\cdots$ denote the singular values of the
resulting row-centered matrix. We define the cumulative spectral energy
explained by the first $J$ components as
\[
E(J)
=
\frac{\sum_{j=1}^{J}\sigma_j^2}
     {\sum_j \sigma_j^2}.
\]
Because this benchmark uses $g=\mathrm{id}$, this analysis is performed
directly on the empirical quantile matrices before any rank projection and
does not use an estimated unwarping transformation.

\begin{figure}[t]
    \centering
    \includegraphics[width=\columnwidth]{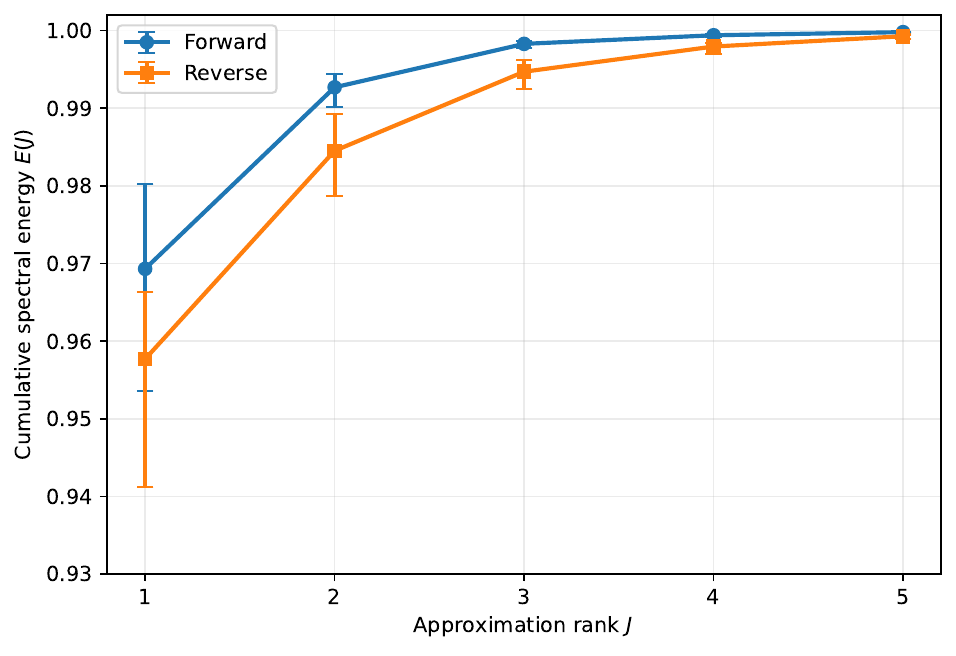}
    \caption{
    Cumulative spectral energy of the row-centered empirical quantile
    matrices for the structural complexity benchmark with Gaussian noise and
    $K_{\mathrm{gen}}=5$. Curves show the median across 100 cause-effect
    pairs, and error bars indicate the interquartile range.
    }
    \label{fig:effective_spectral_rank}
\end{figure}

As shown in Figure~\ref{fig:effective_spectral_rank}, the spectral energy is
strongly concentrated in the first few components.
At $J=2$, the median cumulative energy is $0.9927$
$[0.9902,0.9944]$ in the forward direction and $0.9845$
$[0.9786,0.9892]$ in the reverse direction.
The median paired difference
$\Delta(2)=E_{\mathrm{forward}}(2)-E_{\mathrm{reverse}}(2)$ is $0.0070$
$[0.0027,0.0139]$, with $\Delta(2)>0$ for 85\% of the pairs.

These results clarify that $K_{\mathrm{gen}}=5$ denotes the number of
components in the generating mechanism, rather than the effective rank of the
empirical quantile matrix. In this setting, the row-centered forward quantile matrix is empirically close to rank two, helping to explain why a small
$K_{\mathrm{fit}}$ remains effective under a five-component generating
mechanism. Both directions exhibit strong spectral concentration, with a
modest but consistent advantage in the forward direction.

\subsection{Sensitivity to iterations and initialization}
\label{appendix:optimization_sensitivity}

\paragraph{Sensitivity to outer iterations.}
Algorithm~\ref{alg:LowRank} uses a finite number of alternating
isotonic-regression, inverse-update, and rank-projection steps.
To examine its empirical sensitivity to the number of outer iterations, we
use one representative pair from the LSNM-tanh Gaussian setting and run the
est-g variant with 50 different initializations for
$T_{\mathrm{out}}\in\{0,1,5,10\}$.

\begin{figure}[t]
    \centering
    \includegraphics[width=\columnwidth]{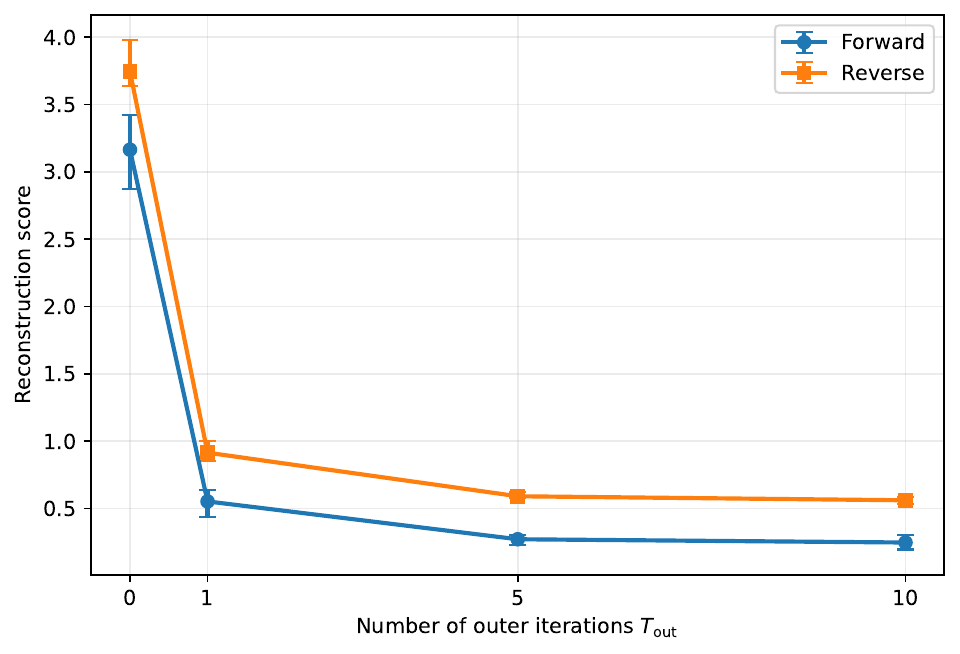}
    \caption{
    Sensitivity of LRQS (est-g) to the number of outer iterations
    $T_{\mathrm{out}}$ on one representative LSNM-tanh Gaussian pair.
    Curves show the median reconstruction score across 50 initializations,
    and error bars indicate the interquartile range.
    }
    \label{fig:e2_outer_iterations}
\end{figure}

As shown in Figure~\ref{fig:e2_outer_iterations}, both forward and reverse
reconstruction scores decrease substantially over the first few outer
iterations. The median forward score decreases from $3.165$ at
$T_{\mathrm{out}}=0$ to $0.271$ at $T_{\mathrm{out}}=5$, while the
corresponding reverse score decreases from $3.745$ to $0.590$.
The changes from $T_{\mathrm{out}}=5$ to $10$ are comparatively small.
The numbers of initializations selecting the correct direction are
$45/50$, $47/50$, $50/50$, and $50/50$ for
$T_{\mathrm{out}}=0,1,5,10$, respectively.

These results indicate that, in this representative example, the
reconstruction scores and directional decisions empirically stabilize after
a modest number of outer iterations, although
Algorithm~\ref{alg:LowRank} does not come with a general convergence
guarantee.

\paragraph{Sensitivity to the number of initializations.}
We next evaluate sensitivity to the number of initializations $m$ using
100 pairs from the same LSNM-tanh Gaussian setting. For each direction, we
retain the minimum reconstruction score over the first $m$ initializations,
as in Algorithm~\ref{alg:Bivariate_LRQS}.

\begin{table}[t]
\centering
\caption{
Sensitivity of LRQS (est-g) to the number of initializations $m$ on
100 LSNM-tanh Gaussian pairs.
}
\label{tab:appendix_initialization_sensitivity}
\begin{tabular}{ccc}
\toprule
$m$ & Accuracy & Exact ties \\
\midrule
1  & 0.94 & 0 \\
5  & 0.95 & 0 \\
10 & 0.96 & 0 \\
50 & 0.97 & 0 \\
\bottomrule
\end{tabular}
\end{table}

As shown in Table~\ref{tab:appendix_initialization_sensitivity}, increasing
$m$ yields only modest improvements in accuracy. In particular, increasing
the number of initializations tenfold from the default $m=5$ to $m=50$
improves accuracy by two percentage points, from $95\%$ to $97\%$.
Thus, in this experiment, the default initialization count provides a
reasonable balance between accuracy and additional computation.

\subsection{Sensitivity to sample size and quantile-surface discretization}
\label{appendix:discretization_sensitivity}

We examine the sensitivity of LRQS to the sample size and the discretization
used to construct the empirical conditional quantile surface.
All experiments in this subsection use the est-g variant on 100
LSNM-tanh Gaussian cause-effect pairs.

\paragraph{Sample size and number of bins.}
Let $u_{\min}$ denote the smallest quantile level used in the grid. We first vary the sample size
$n\in\{250,1000,4000\}$ and the number of conditioning-variable bins
$G\in\{5,10,20,30\}$, while fixing $B=10$ and $u_{\min}=0.05$.

\begin{figure}[t]
    \centering
    \includegraphics[width=\columnwidth]
    {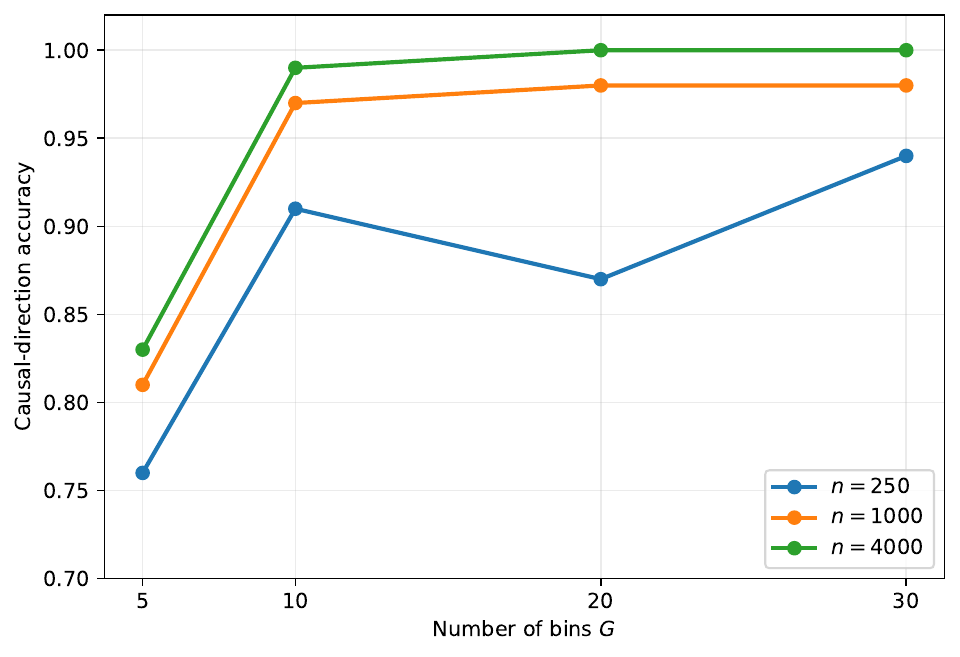}
    \caption{
    Sensitivity of LRQS (est-g) to the sample size $n$ and the number of
    conditioning-variable bins $G$ on 100 LSNM-tanh Gaussian pairs.
    The number of quantile levels is fixed to $B=10$ with
    $u_{\min}=0.05$. Each point reports causal-direction accuracy over
    the 100 pairs.
    }
    \label{fig:e3_sample_size_bin_sensitivity}
\end{figure}

As shown in Figure~\ref{fig:e3_sample_size_bin_sensitivity}, performance
degrades when the discretization along the conditioning variable is too
coarse. With $G=5$, accuracy ranges from $76\%$ to $83\%$ across the
evaluated sample sizes, with 3, 5, and 6 exact ties for
$n=250,1000,4000$, respectively.
In contrast, for $G\in\{10,20,30\}$ and $n\geq1000$, accuracy remains
between $97\%$ and $100\%$.
Thus, the default setting $n=1000$ and $G=10$ lies within a broader region
of high empirical accuracy, although overly coarse binning can degrade
directional discrimination.

\paragraph{Quantile-grid resolution and quantile range.}
We next examine the discretization along the quantile axis.
Under the default construction
$u_l=(2l-1)/(2B)$, changing $B$ also changes the outermost quantile level
$u_{\min}=1/(2B)$. We therefore interpret this experiment as a comparison of
quantile-grid resolutions rather than as an isolated effect of $B$.
With $n=1000$ and $G=10$, accuracy increases from $87\%$ at $B=5$
to $97\%$, $98\%$, and $100\%$ at $B=10$, $20$, and $30$,
respectively.

To separately examine sensitivity to the quantile range, we fix $G=B=10$
and vary $u_{\min}\in\{0.10,0.05,0.025,0.01\}$.
Table~\ref{tab:appendix_quantile_discretization} summarizes these analyses.

\begin{table}[t]
\centering
\caption{
Sensitivity of LRQS (est-g) to the quantile-grid resolution and quantile
range on LSNM-tanh Gaussian pairs. Each accuracy is computed over
100 cause-effect pairs.
}
\label{tab:appendix_quantile_discretization}

\begin{tabular}{cccc}
\toprule
\multicolumn{4}{c}{\textbf{Quantile-grid resolution}
($n=1000$, $G=10$)} \\
\midrule
$B$ & $u_{\min}$ & Accuracy & Exact ties \\
\midrule
5  & 0.1000 & 0.87 & 2 \\
10 & 0.0500 & 0.97 & 0 \\
20 & 0.0250 & 0.98 & 0 \\
30 & 0.0167 & 1.00 & 0 \\
\bottomrule
\end{tabular}

\vspace{2mm}

\begin{tabular}{ccccc}
\toprule
\multicolumn{5}{c}{\textbf{Quantile-range sensitivity}
($G=B=10$)} \\
\midrule
$n$
& $u_{\min}=0.10$
& $u_{\min}=0.05$
& $u_{\min}=0.025$
& $u_{\min}=0.01$ \\
\midrule
250  & 0.89 & 0.91 & 0.90 & 0.93 \\
1000 & 0.91 & 0.97 & 0.96 & 0.96 \\
\bottomrule
\end{tabular}
\end{table}

Across the explicitly varied quantile ranges, accuracy remains between
$89\%$ and $93\%$ for $n=250$ and between $91\%$ and $97\%$ for
$n=1000$, with no exact ties.
A joint higher-resolution setting with $n=1000$ and $G=B=30$ also achieves
$100\%$ accuracy.
Overall, these results indicate that performance is stable over a reasonable
neighborhood of the default discretization, while excessively coarse
binning or quantile grids can reduce accuracy in this benchmark.

\subsection{Directional score separation}
\label{appendix:directional_score_gap}

In a separate diagnostic experiment, we examine the normalized
directional score gap $\delta$ defined in
Eq.~(\ref{eq:directional-score-gap}) across the same 42 custom
benchmark settings used in Table~\ref{tab:custom_benchmarks},
with 100 cause-effect pairs per setting.

\begin{figure}[t]
    \centering
    \includegraphics[width=\columnwidth]
    {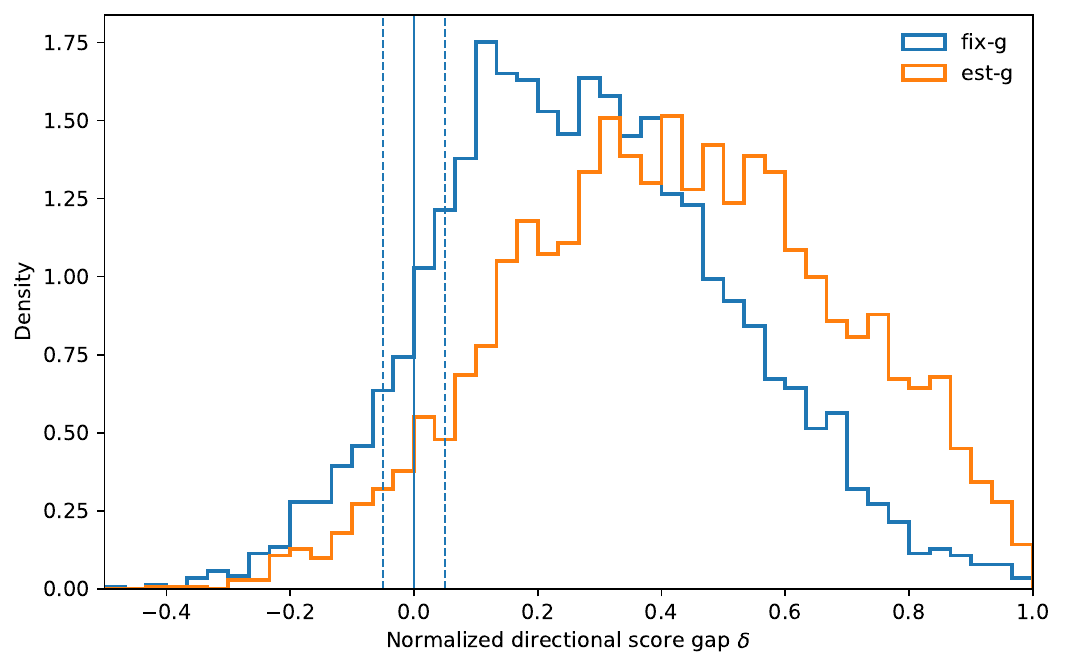}
    \caption{
    Distribution of the normalized directional score gap $\delta$ for LRQS
    (fix-g) and LRQS (est-g) across the 42 custom benchmark settings in a
    separate diagnostic experiment. Each setting contains 100 cause-effect
    pairs. The solid vertical line denotes $\delta=0$, and the dashed lines
    indicate the descriptive near-tie region $|\delta|<0.05$.
    }
    \label{fig:e4_directional_score_gap}
\end{figure}

Figure~\ref{fig:e4_directional_score_gap} shows that the score-gap
distributions are predominantly shifted toward positive values for both
variants. The median normalized gap is $0.270$ for fix-g and $0.422$ for
est-g. For descriptive purposes, we define a near tie as
$|\delta|<0.05$. Under this criterion, near ties account for $9.2\%$ of
the pairs for fix-g and $4.6\%$ for est-g, with no exact score ties observed
for either variant.

\begin{table}[t]
\centering
\caption{
Directional score separation for LRQS in the diagnostic experiment.
The near-tie threshold $|\delta|<0.05$ is used only as a descriptive
diagnostic.
}
\label{tab:appendix_directional_score_gap}
\small
\begin{tabular}{lcccc}
\toprule
Method
& Median $\delta$
& Near ties
& Median $|\delta|$ (correct)
& Median $|\delta|$ (incorrect) \\
\midrule
LRQS (fix-g)
& 0.270
& 9.2\%
& 0.302
& 0.081 \\
LRQS (est-g)
& 0.422
& 4.6\%
& 0.441
& 0.075 \\
\bottomrule
\end{tabular}
\end{table}

As shown in Table~\ref{tab:appendix_directional_score_gap}, correctly
identified pairs tend to exhibit substantially larger absolute score gaps
than incorrectly identified pairs. For fix-g, the median $|\delta|$ is
$0.302$ among correct decisions and $0.081$ among incorrect decisions;
for est-g, the corresponding values are $0.441$ and $0.075$.
Thus, incorrect decisions in this experiment tend to occur when the
forward and reverse reconstruction scores are relatively close, whereas
correct decisions generally exhibit clearer directional separation.

\subsection{Ablation of the low-rank constraint}
\label{appendix:low_rank_ablation}

To assess whether the directional discrimination of LRQS arises from the
low-rank constraint rather than from the flexibility of the monotone
unwarping alone, we perform an ablation using the est-g variant.
With $G=B=10$, the row-centered quantile matrix has rank at most $9$.
We therefore compare the standard setting $K_{\mathrm{fit}}=2$ with
$K_{\mathrm{fit}}=9$, which effectively removes the rank truncation.

We evaluate two representative settings from the custom benchmarks:
structural complexity with Gaussian noise and $K_{\mathrm{gen}}=5$, and
AN-tanh with Gaussian noise. Each setting contains the same 100 cause-effect
pairs used in the corresponding experiments.

\begin{table}[t]
\centering
\caption{
Ablation of the low-rank constraint for LRQS (est-g).
With $K_{\mathrm{fit}}=9$, the rank truncation is effectively removed for the
$10\times10$ empirical quantile matrices.
}
\label{tab:appendix_low_rank_ablation}
\begin{tabular}{lcc}
\toprule
Setting
& $K_{\mathrm{fit}}=2$
& $K_{\mathrm{fit}}=9$ \\
\midrule
Structural complexity
($K_{\mathrm{gen}}=5$, Gaussian)
& 100\% correct
& 100/100 exact ties \\
AN-tanh (Gaussian)
& 88\% correct
& 100/100 exact ties \\
\bottomrule
\end{tabular}
\end{table}

As shown in Table~\ref{tab:appendix_low_rank_ablation}, removing the
effective rank constraint eliminates directional discrimination in both
settings. For every pair with $K_{\mathrm{fit}}=9$, both directions are
perfectly reconstructed,
$s_{\mathrm{forward}}=s_{\mathrm{reverse}}=0$, so the causal direction
cannot be determined. In contrast, the standard low-rank setting retains
clear directional information.

This ablation shows that the performance of est-g in these experiments
cannot be attributed to the flexibility of the monotone unwarping alone.
Rather, the low-rank restriction provides an essential inductive bias for
preserving asymmetry between the two candidate directions.

\subsection{Stress test under monotonicity violations}
\label{appendix:monotonicity_violation}

The identifiability analysis assumes that the observation-level
transformation $g$ is strictly increasing. To examine empirical behavior
outside this assumed regime, we conduct a stress test using the non-monotone
transformation family
\[
g_{\kappa}(z)
=
\begin{cases}
z^4, & z \geq 0,\\
|z|^{4+\kappa}, & z < 0,
\end{cases}
\]
where $\kappa\in\{0,0.5,1,1.5,2\}$.

The transformation folds the latent variable around zero and is therefore
non-monotone for every value of $\kappa$. At $\kappa=0$, the transformation
is symmetric, whereas increasing $\kappa$ introduces greater asymmetry
between the negative and positive sides.

We evaluate this stress test on the structural complexity benchmark with
Gaussian noise, $K_{\mathrm{gen}}=5$, and $n=1000$, using 100 cause-effect
pairs. We set $G=B=10$ and vary the fitted rank over
$K_{\mathrm{fit}}\in\{1,2,3,4,5\}$.

\begin{table}[t]
\centering
\caption{
Causal-direction accuracy under violations of the increasing-transformation
assumption. Results are reported over 100 cause-effect pairs for each setting.
}
\label{tab:appendix_monotonicity_violation}
\small
\begin{tabular}{lcccccc}
\toprule
Method
& $K_{\mathrm{fit}}$
& $\kappa=0$
& $\kappa=0.5$
& $\kappa=1$
& $\kappa=1.5$
& $\kappa=2$ \\
\midrule
LRQS (fix-g) & 1 & 1.00 & 1.00 & 1.00 & 1.00 & 1.00 \\
             & 2 & 1.00 & 1.00 & 1.00 & 1.00 & 1.00 \\
             & 3 & 1.00 & 1.00 & 1.00 & 1.00 & 1.00 \\
             & 4 & 1.00 & 1.00 & 1.00 & 1.00 & 1.00 \\
             & 5 & 1.00 & 1.00 & 1.00 & 1.00 & 1.00 \\
\midrule
LRQS (est-g) & 1 & 1.00 & 1.00 & 1.00 & 1.00 & 1.00 \\
             & 2 & 1.00 & 1.00 & 1.00 & 1.00 & 1.00 \\
             & 3 & 1.00 & 1.00 & 1.00 & 1.00 & 1.00 \\
             & 4 & 1.00 & 1.00 & 1.00 & 1.00 & 1.00 \\
             & 5 & 0.92 & 0.93 & 0.95 & 0.90 & 0.86 \\
\bottomrule
\end{tabular}
\end{table}

As shown in Table~\ref{tab:appendix_monotonicity_violation}, fix-g achieves
perfect directional accuracy across all evaluated values of $\kappa$ and
$K_{\mathrm{fit}}$. The est-g variant also achieves perfect accuracy for
$K_{\mathrm{fit}}\leq4$ across the entire transformation family.
Performance decreases only for the more flexible
$K_{\mathrm{fit}}=5$ setting, where accuracy ranges from $86\%$ to $95\%$.

Thus, LRQS remains empirically effective under this particular family of
non-monotone observation transformations, especially with conservative
fitted ranks.

\section{Computational resources and implementation details}\label{appendix:comp_resource_implementation_detail}

We provide the complete details of our computational environments, software
versions, and specific implementation patches applied to the baseline codes.

\subsection{Implementation details of LRQS}
\label{appendix:experiment_details}

\paragraph{Construction of the empirical quantile matrix.}
For a candidate direction $X \rightarrow Y$, we sort the observations by the
conditioning variable $X$ and partition the sorted indices into $G$ groups
using \texttt{numpy.array\_split}. This produces approximately equal-count bins
whose sample sizes are as equal as possible; when the sample size $n$ is not
divisible by $G$, the bin sizes differ by at most one.

Unless explicitly specified otherwise, the $B$ quantile levels are
\[
u_l = \frac{2l-1}{2B},
\qquad l=1,\ldots,B.
\]
For example, $B=10$ gives
$u_l \in \{0.05,0.15,\ldots,0.95\}$.
Within each bin, the corresponding empirical response quantiles are computed
using \texttt{numpy.quantile} without overriding its default quantile rule.

No special tie-aware binning rule is applied. Observations with identical
values of the conditioning variable are partitioned according to their
positions after sorting and are not explicitly constrained to remain in the
same bin. Ties in the response variable are passed directly to
\texttt{numpy.quantile}. For nonempty bins, no minimum-size rule, bin merging,
or additional smoothing is applied. If an empty bin occurs, its quantile row
is copied from the preceding bin; if the first bin is empty, its row is set to
zero. The same construction is applied in the reverse direction after
exchanging the conditioning and response variables.

\paragraph{Default settings.}
Unless otherwise stated, we use $G=10$ bins and $B=10$ quantile levels to
construct the empirical quantile surfaces, and set the number of non-intercept
basis functions to $K_{\mathrm{fit}}=2$.
In the iterative optimization process
(Algorithm~\ref{alg:LowRank}), the inner loop for rank-constrained projection
and the outer loop for alternating isotonic-regression and inverse-update steps
are repeated $T_{\mathrm{in}}=5$ and $T_{\mathrm{out}}=10$ times,
respectively.

For each additional initialization, we add
Gaussian perturbations
$\Sigma_{ij}\sim\mathcal{N}(0,0.5^2)$
to the initial surface to reduce sensitivity to initialization during the
alternating estimation procedure.

\subsection{Computational environments}

The experiments in this study were conducted across two distinct computing
environments depending on the hardware requirements of the evaluated methods.

\textbf{1. Local CPU environment:}
All experiments for our proposed LRQS method (both fix-g and est-g) and the
CPU-based baselines (ANM, DIVOT, QPE-k) were executed on a local workstation.

\begin{itemize}
    \item OS: Windows 11
    \item CPU: 11th Gen Intel(R) Core(TM) i7-1165G7 @ 2.80 GHz
    (4 cores, 8 threads)
    \item Memory (RAM): 8.0 GB (3200 MT/s)
\end{itemize}

\textbf{Software stack:}
Python 3.13.6. The core dependencies for executing our methods and local
baselines include NumPy (v2.4.2) and scikit-learn (v1.8.0).

\textbf{2. GPU computing cluster:}
The experiments involving the GPU-based baselines, CVEL and the three QPE-f
variants (fixed, poly, and lowrank), were conducted on a Docker computing
cluster to leverage hardware acceleration.

\begin{itemize}
    \item GPU: Single NVIDIA RTX 2080 Ti (11 GB VRAM)
\end{itemize}

\textbf{Software stack:}
We utilized the official NVIDIA PyTorch Docker container image
(\texttt{nvcr.io/nvidia/pytorch:23.12-py3}), which provides a highly
optimized environment containing Python 3.10 and PyTorch 2.2.0.

For the baseline methods, we used the official implementations
released by the original authors.
The evaluation protocol for Table~\ref{tab:main_results} is
described in Section~\ref{sec:experiments}.

\subsection{Implementation patches for QPE-f baseline}

To ensure a fair and reproducible evaluation of the three QPE-f variants
(fixed, poly, and lowrank), we utilized the official source code provided by~\citet{QPE_2026causal}. However, we encountered severe numerical
instability and compatibility issues when running the original implementation
in our GPU environment. Specifically, the model frequently produced
$-\infty$ for intermediate QPE scores, leading to
identical and trivial predictions across multiple benchmark datasets.

To conduct a valid comparative study, we introduced two minimal,
mathematically equivalent patches to the official codebase:

\begin{enumerate}
    \item \textbf{Type-hint compatibility fix:}
    We updated outdated type hints
    (\texttt{from torch.types import Tensor, Device})
    that caused import errors in newer PyTorch environments.

    \item \textbf{Numerical stabilization for Jacobian computation:}
    We traced the $-\infty$ issue to the denominator computation of the QPE
    term ($\partial u/\partial y$). In the original code, this was computed
    using the Jacobian-vector product (\texttt{jvp}), which frequently
    collapsed to zero in our environment, causing the term
    $\widehat{\mathrm{QPE}}
    =-(\partial u/\partial x)/(\partial u/\partial y)$
    to diverge. We patched this by replacing the \texttt{jvp}-based
    calculation with
    \texttt{log\_abs\_det\_jacobian(...).exp()},
    which retrieves the exact same analytical Jacobian through the
    normalizing flow's native method.
\end{enumerate}

This alternative implementation is mathematically equivalent, resolving the
$-\infty$ collapse and yielding dataset-specific performance scores
consistent with expectations.

\end{document}